\DeclareMathOperator\erfc{erfc}
\def\Tr{\mathop{\rm Tr}\nolimits}
\def\d{\mathop{\rm d}  \hspace{-1.5pt}}
\def\argmin{\mathop{\rm argmin}}
\newcommand{\map}[1]{\mathcal{#1}}
\newcommand{\set}[1]{\mathsf{#1}}
\newcommand{\spc}[1]{\mathcal{#1}}
\def\>{\rangle}
\def\<{\langle}
  \gdef\Braket#1{\begingroup
\mathcode`\|32768\let|\BraVert\left<{#1}\right>\endgroup}}
\def\BraVert{\egroup\,\mid\,\bgroup}
\newtheorem{mythm}{Theorem}
\newtheorem{Lemma}{Lemma}
\def\R{\mathbb R}
\def\N{\mathbb N}
\def\C{\mathbb C}
\def\Z{\mathbb Z}
\definecolor{kmblue}{rgb}{0.19, 0.25, 0.91}
\definecolor{kmred}{rgb}{0.79, 0.29, 0.0}
\definecolor{kmgreen}{rgb}{0, 0.42, 0.24}
\begin{document}

\title{Compression for quantum population coding}
\author{Yuxiang Yang, Ge Bai, Giulio Chiribella, and Masahito Hayashi~\IEEEmembership{Fellow,~IEEE}
\thanks{Y.  Yang  (e-mail: yangyx09@connect.hku.hk) and G. Bai  (e-mail: baige@connect.hku.hk)  are with the Department of Computer Science, The University of Hong Kong, Pokfulam Road, Hong Kong, and with the HKU Shenzhen Institute of Research and Innovation, Kejizhong 2$^{\rm nd}$ Road,  Shenzhen, China.}
\thanks{G. Chiribella (e-mail: giulio.chiribella@cs.ox.ac.uk)   is with the Department of Computer Science, The University of Oxford, Parks Road, Oxford, UK,  with the Canadian Institute for Advanced Research, CIFAR Program in Quantum Information Science,  with the Department of Computer Science, The University of Hong Kong, Pokfulam Road, Hong Kong, and with the HKU Shenzhen Institute of Research and Innovation, Kejizhong 2$^{\rm nd}$ Road,  Shenzhen, China. }
\thanks{M. Hayashi   is with the Graduate School of Mathematics, Nagoya University,
Furocho, Chikusaku, Nagoya, 464-860, Japan,
and
Centre for Quantum Technologies, National University of Singapore, 3 Science Drive 2, Singapore 117542.  }  
}

\markboth{Y. Yang, G. Bai, G. Chiribella, and M. Hayashi: Compression for quantum population coding}{}

\maketitle

\begin{abstract}
We study the compression of  $n$ quantum systems, each prepared in the same state belonging to a given parametric family of quantum states.
For a family  of states with $f$ independent parameters, we devise an asymptotically faithful protocol that requires a hybrid memory of  size  $(f/2)\log n$, including both quantum and classical bits.  Our construction uses a quantum version of local asymptotic normality and, as an intermediate step, solves the problem of compressing  displaced thermal states of $n$ identically prepared modes.  In both cases, we show that  $(f/2)\log n$ is the minimum amount of  memory   needed  to achieve asymptotic faithfulness. In addition, we  analyze how much of the memory needs to be quantum.   We find that the ratio between quantum and classical bits can be made arbitrarily small, but cannot reach zero: unless all the quantum states in the family commute, no  protocol using only classical bits can be faithful, even if it uses an arbitrarily large number of classical  bits.

\end{abstract}

\begin{IEEEkeywords}
Population coding,
compression,
quantum system,
local asymptotic normality,
identically prepared state
\end{IEEEkeywords}

\section{Introduction}
Many problems in quantum information theory involve a source that prepares multiple  copies of the same quantum state.  This is the case, for example, of quantum tomography \cite{banaszek2013focus}, quantum cloning \cite{review-cloning,cloning-continuous}, and  quantum state discrimination \cite{barnett2009quantum}. The state prepared by the source  is generally unknown to the agent who has to carry out the task.  Instead, the agent knows that the state belongs to some parametric family of density matrices  $\{\rho_\theta\}_{\theta\in \Theta}$, with the parameter $\theta$ varying in the set $\Theta$.  It is generally assumed  that the source prepares each particle identically and independently:  when the source is used $n$ times, it generates $n$ quantum particles in the tensor product state $\rho_\theta^{\otimes n}$.

A fundamental question is how much information is contained in the $n$-particle state $\rho_\theta^{\otimes n}$.      One way to address the question is to quantify the minimum amount of  memory  needed to  store the state, or equivalently, the minimum amount of communication needed to transfer the state from a sender to a receiver.  Solving this problem requires an optimization over all possible compression protocols.

%When the number of copies is large, it is tempting to use a classical protocol, where  the parameter $\theta$ is estimated and the estimate is stored in a classical memory.   However, this type of storage is generally not faithful, as shown in \cite{yang2014certifying}  for several examples of pure state families. In order to achieve a faithful storage, a non-zero amount of quantum memory is generally required.

It  is important to stress that the problem of storing the $n$-copy states $\{\rho_\theta^{\otimes n}\}_{\theta \in  \Theta}$ in a quantum memory is different from the standard problem of quantum data compression
   \cite{schumacher1995quantum,jozsa1994anew,lo1995quantum}.
   %In standard quantum data compression, each use of the source generates a \emph{different} quantum state, drawn at random from a given probability distribution,  and the goal is to   reproduce with high probability  the sequence generated by the source.
    In our scenario, the mixed state $\rho_\theta$ is not regarded as the average state of an information source, but, instead, as a physical encoding of the parameter $\theta$.  The goal of  compression is to preserve the encoding of the parameter $\theta$, by storing the state $\rho_\theta^{\otimes n}$ into a memory and retrieving it with high fidelity for all possible values of $\theta$.  To stress the difference with standard quantum compression, we refer to our scenario as  \emph{compression for quantum population coding}. The expression ``quantum population coding" refers to the encoding of the parameter $\theta$ into the many-particle state $\rho_\theta^{\otimes n}$, viewed as the state of a ``population" of quantum systems.  We choose this expression in analogy with   the classical notion of population coding, where  a parameter  $\theta$ is encoded  into the   population of $n$ individuals  \cite{WAN}.  The typical example of population coding arises in computational neuroscience, where the population consists of neurons and the parameter  $\theta$ represents an external stimulus.

 The compression for quantum population coding   has been studied by Plesch and Bu\v zek \cite{buzek} in the case where $\rho_\theta$ is a pure qubit state and no error is tolerated (see also \cite{rozema} for a prototype experimental implementation).
%A prototype experiment based on this proposal was implemented by Rozema \emph{et al} \cite{rozema} on a photonic platform.
A first extension to mixed states, higher dimensions, and non-zero error was proposed by some of us in    \cite{yang-chiribella-2016-prl}.  The protocol therein was proven to be optimal under the assumption that the decoding operation must satisfy  a suitable conservation law.  Later, it was shown that, when the conservation law is lifted, a new protocol can achieve a better compression, reaching the ultimate information-theoretic bound set by Holevo's bound \cite{universal}. This result applies to two-dimensional quantum systems with completely unknown Bloch vector and/or completely unknown purity.  The classical version of  the compression for population coding was addressed  in \cite{classical}.  However,    finding the optimal protocol for  arbitrary parametric families and for quantum systems of arbitrary dimension  has remained as an open problem so far.
%Solving this problem  is  the goal of the present paper.

 In this paper, we provide  a general theory of compression for   quantum states   of the form $\rho_\theta^{\otimes n}$.
We consider two categories of states: {\em (i)} generic quantum states in finite dimensions, and  {\em (ii)}   displaced thermal states in infinite dimension.  These two categories of states are  connected by the quantum version of local asymptotic normality (Q-LAN) \cite{H-LAN,LAN,LAN2,kahn-thesis,LAN3}, which locally reduces the tensor product state $\rho_\theta^{\otimes n}$ to a displaced thermal state,  regarded as the quantum version of the normal distribution.

We  will discuss first the compression of  displaced thermal states.
Then, we  will employ Q-LAN to reduce the problem of compressing generic finite-dimensional states to the problem  of compressing displaced thermal states.
In both cases,  our compression protocol uses a hybrid memory, consisting both of classical and quantum bits. For a family of quantum states described by $f$ independent parameters, the total size of the memory is $f/2 \log n$ at the leading order, matching the ultimate limit set by  Holevo's bound \cite{holevo-1973}.
%The compression is thus independent of assumptions on the prior information and is more readily to be put into practice.
%The protocol presented in this work uses less memory compared to the one in \cite{} comparing their performances on the full model case.

An intriguing feature of our compression protocol is that the ratio between the number of quantum bits and the number of classical bits  can be made arbitrarily close to zero, but not exactly equal to zero.  Such a feature is not an accident: we show that, unless the states commute, every asymptotically faithful protocol must use a non-zero amount of quantum memory.  This result extends  an observation made  in \cite{yang2014certifying} from certain families of pure states to generic families of states.

The  paper is structured as follows. In section \ref{sec-result} we state the main results of the paper.  In Section \ref{sec-thermal} we study the compression of displaced thermal states.     In Section \ref{sec-main} we provide the  protocol for the compression of identically prepared finite-dimensional states.
In Section \ref{sec:classical} we show that every protocol achieving asymptotically faithful compression must use a  quantum memory.
Optimality of the protocols is proven later in Section \ref{sec-optimality}.
Finally,   the conclusions are drawn in Section \ref{sec-discussion}.

\section{Main result.}\label{sec-result}
The main result of this work is the optimal compression of identically prepared quantum states. We consider two categories of states: generic finite dimensional (i.e. qudit) states and infinite-dimensional displaced thermal states.
%We search for the minimal amount of memory needed to store the states so that they can be retrieved  with asymptotically vanishing  error.

%The memory cost  depends on the family from which the states are drawn.
% For example, if a family of states is contained in another, the former will require less memory than the latter.
%It is then crucial to specify which family of quantum states we are considering.
 Let us start from the first category.    For a quantum system of dimension $d<\infty$,  also known as qudit,  we consider {\em generic states} described by density matrices with  full rank and non-degenerate spectrum.
     We parametrize the states of a $d$-dimensional quantum system as
\begin{align}\label{state}
\rho_{\theta}=U_\xi \, \rho_0(\mu) \, U^\dag_\xi  \, ,\quad \theta  =  (  \xi,\mu) \quad\xi\in\R^{d(d-1)}\quad\mu  \in\R^{d-1}
\end{align}
where  $\rho_0(\mu)$ is the fixed  state
\begin{align}\label{rhomu}
\rho_0 (\mu)  =     \sum_{j=1}^{d}  \,  \mu_j \,   |j\>\<j|\, \qquad\mu_d:=1-\sum_{k=1}^{d-1}\mu_k,
\end{align}
with spectrum ordered as  $\mu_1>\cdots>\mu_{d-1}>\mu_d>0$, while $U_\xi$ is the unitary matrix defined by
\begin{align}\label{parameters}
U_\xi&=\exp\left[i\left(\sum_{1\le j<k\le d}\frac{\xi^{\rm I}_{j,k}T^{\rm I}_{j,k}+\xi^{\rm R}_{j,k}T^{\rm R}_{k,j}}{\sqrt{\mu_j-\mu_k}}\right)\right]
\end{align}
 Here $\xi$ is a vector of real parameters $ (\xi^{\rm R}_{j,k},\xi^{\rm I}_{j,k})_{1\le j<k\le d}$,  and  $T^{\rm I}$   ($T^{\rm R}$)
 is the matrix  defined by $T^{\rm I}_{j,k}:=iE_{j,k}-iE_{k,j}$   ($T^{\rm R}_{k,j}:=E_{j,k}+E_{k,j})$, where
$E_{j,k}$   is the $d\times d$ matrix with  1 in the entry $(j,k)$ and 0 in all the other entries.

We consider $n$-copy qudit state families, denoted   as $\{\rho_\theta^{\otimes n}\}_{\theta\in\set{\Theta}}$ where $\set{\Theta}$ is the set of possible vectors $\theta=  (\xi,\mu)$.
 We call the components of the vector $\xi$  {\em quantum parameters} and the components of the vector  $\mu$  {\em classical parameters}.  The classical parameters determine the eigenvalues of the density matrix  $\rho_\theta$, while the quantum parameters determine the eigenbasis.

 We say that a parameter is {\em independent} if it can vary continuously while  the other parameters are kept  fixed.   For a given family of states,   we denote by $f_c$ ($f_q$) the maximum number of independent classical (quantum) parameters describing states in the family.  For example, the family of all diagonal density matrices $\rho_0(\mu)$ in Eq. (\ref{rhomu}) has $d-1$ independent parameters.
 %, corresponding to the set $\set F_{\max}=  \{\mu_1,\mu_2,\dots,  \mu_{d-1}\}$.
 %For a general family, the maximal set of independent classical parameters must be  a subset of $\set F_{\rm max}$.
 The family of all quantum states in dimension $d$ has $d^2-1$ independent  parameters, of which $d-1$ are classical and   $d(d-1)$ are quantum.
  In general, we will assume that the family $\{\rho_\theta^{\otimes n}\}_{\theta \in\set \Theta}$ is such that every component of the vector $\theta$ is either independent or fixed to a specific value.

%We consider the compression of an arbitrary family of $n$ identically prepared non-degenerate  qudit states. Note that, by ``non-degenerate'', we mean that the eigenvalues $\mu_1,\dots,\mu_{d-1}$ are distinct and strictly greater than zero.
Let us  introduce now the second category of states that are relevant in this paper: the displaced thermal states \cite{vourdas1986superposition,marian1993squeezed}.  Displaced thermal states are a type of infinite-dimensional states frequently encountered in quantum optics \cite{saleh2013photoelectron}.
\iffalse  Physically, a displaced thermal state can  be regarded as the result of noise on a coherent state
\begin{align*}
|\alpha\>=D_\alpha|0\>=e^{-\frac{|\alpha|^2}2}\sum_{k=0}^{\infty}\frac{\alpha^k}{\sqrt{k!}}|k\>\qquad\alpha=|\alpha|e^{i\varphi}\quad \varphi\in[0,2\pi) \, ,
\end{align*}
\fi
%A displaced thermal state can be regarded as the state of a laser under noisy conditions.
%The displacement $D_\alpha$ is now applied to a mixed state, instead of to the vacuum state $|0\>$, generating the displaced thermal state as:
Mathematically, they have the form
\begin{align}
\rho_{\alpha,\beta}=D_{\alpha}\,\rho^{{\rm thm}}_{\beta}D^\dag_{\alpha}
\end{align}
where $D_\alpha=\exp(\alpha \hat{a}^\dag-\bar{\alpha} \hat{a})$ is the {\em displacement operator}, defined in terms of a complex parameter $\alpha \in \C$ (the {\em displacement}),  and  $\hat{a}$ is the  annihilation operator,  satisfying the relation $[\hat a, \hat a^\dag] =1$, while  $\rho^{{\rm thm}}_{\beta}$ is a {\em thermal state},  defined as
\begin{align}
\rho^{{\rm thm}}_{\beta}&:=(1-\beta)\sum_{j=0}^{\infty}\beta^j\,  |j\>\<j| \, ,
\end{align}
where $\beta \in  [0,1) $ is a real parameter, here called the {\em thermal parameter}, and the basis $\{  |j\> \}_{j \in \N}$ consists of the eigenvectors of $\hat a^\dag \hat a$.  For  $\beta =0$, the the displaced thermal states are pure.   Specifically, the state $\rho_{\alpha,  \beta=  0 }$ is the projector on the coherent state \cite{glauber1963coherent}  $|\alpha\> :  =  D_\alpha\,  |0\>$.
%\begin{align*}
%|\alpha\>=D_\alpha|0\>=e^{-\frac{|\alpha|^2}2}\sum_{j=0}^{\infty}\frac{\alpha^j}{\sqrt{j!}}|j\> \, .
%\end{align*}

In the context of quantum optics, infinite dimensional systems are often called {\em modes}.  We will consider  the compression of  $n$  modes, each prepared in the same displaced thermal state.  We denote the $n$-mode states as   $\{\rho_{\alpha,\beta}^{\otimes n}\}_{(\alpha,\beta)\in\set{\Theta}}$ with $\set{\Theta}=\set{\Theta}_{\alpha}\times\set{\Theta}_{\beta}$ being the parameter space. There are three real parameters for the displaced thermal state family: the thermal parameter $\beta$, the amount  of displacement $|\alpha|$, and the phase $\varphi  = \arg \alpha$.   Here, $\beta$ is a classical parameter, specifying the eigenvalues, while $|\alpha|$ and $\varphi$ are quantum parameters, determining the eigenstates.
   We will assume that each of the three parameters  $\beta, |\alpha|$ and $\varphi$  is either independent, or fixed to a determinate value.

A compression protocol consists of two components: the encoder, which compresses the input state into a memory, and the decoder, which recovers the state from the memory. The compression protocol  for $n$ identically prepared systems is  represented by a couple of quantum channels (completely positive trace-preserving linear maps) $(\map{E}_{n}, \map{D}_{n})$ characterizing the encoder and the decoder, respectively.
We focus on {\em asymptotically faithful} compression protocols, whose error vanishes in the large $n$ limit. As a measure of error, we choose  the supremum of the trace distance between the original state and the recovered state $\map{D}_{n}\circ\map{E}_{n}(\rho_\theta^{\otimes n})$
\begin{align}
\epsilon:=\sup_{\theta\in\set{\Theta}}\frac12  \Big \|\rho_\theta^{\otimes n}-\map{D}_{n}\circ\map{E}_{n}(\rho_\theta^{\otimes n})  \Big\|_1 \, . \label{errordefinition}
\end{align}
The main result of the paper is the following:
\begin{mythm}\label{main}
Let $\{\rho_\theta^{\otimes n}\}_{\theta=(\xi,\mu)\in\set{\Theta}}$ be a generic family of $n$-copy qudit states with $f_c$ independent classical parameters and $f_q$ independent quantum parameters.
For any $\delta\in(0,2/9)$, the states in the family can be compressed into $[(1/2+\delta)f_c+(1/2)f_q]\log n$ classical bits and $(f_q\delta)\log n$ qubits with an error $\epsilon=O\left(n^{-\kappa(\delta)}\right)+O\left( n^{-\delta/2}\right)$, where $\kappa(\delta)$ is the error exponent of Q-LAN \cite{kahn-thesis}   (cf.  Eq. (\ref{kappa}) in the following). The protocol is  optimal, in the sense that any compression protocol using a memory of size $[(f_c+f_q)/2-\delta']\log n$ with $\delta'>0$ cannot be  asymptotically faithful.

The same results hold for a family $\{\rho_{\alpha,\beta}^{\otimes n}\}_{(\alpha,\beta)\in\set{\Theta}}$ of displaced thermal states, except that in this case the error is only $\epsilon=O\left( n^{-\delta/2}\right)$.
\end{mythm}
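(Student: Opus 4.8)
The plan is to establish the displaced-thermal (Gaussian) case first, reduce the qudit case to it through Q-LAN, and treat optimality by a single Holevo/packing argument. The organising observation is that in both cases the parameters split into a commuting block and a non-commuting block: the classical parameters ($\mu$, resp.\ $\beta$) govern eigenvalues and, after a Gaussian/normal reduction, become classical Gaussian registers, whereas the quantum parameters ($\xi$, resp.\ $|\alpha|$ and $\varphi$) govern the eigenbasis and become displacements of otherwise-fixed thermal modes. I would exploit this split to route the commuting information into classical memory and only the non-commuting residue into quantum memory, which is exactly what produces the asymmetric budget $[(1/2+\delta)f_c+(1/2)f_q]\log n$ classical versus $(f_q\delta)\log n$ quantum.

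Concretely, for $\{\rho_{\alpha,\beta}^{\otimes n}\}$ I would first apply a passive mode-mixing unitary (a discrete Fourier transform on the $n$ modes), which leaves the tensor-product thermal background invariant and funnels the entire displacement into a single mode, turning the state into $\rho_{\sqrt n\,\alpha,\beta}\otimes(\rho^{{\rm thm}}_\beta)^{\otimes(n-1)}$. The background $(\rho^{{\rm thm}}_\beta)^{\otimes(n-1)}$ is diagonal, hence a classical source whose sufficient statistic (total photon number) is a sum of i.i.d.\ variables; by a Berry--Esseen/CLT estimate it is reproduced from a classically stored value of $\beta$ kept to precision $n^{-\delta/2}$, costing $(1/2+\delta)\log n$ bits and contributing only $O(n^{-\delta/2})$ to the error, \emph{with no quantum memory}, precisely because these states commute. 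The displaced mode carries displacement $\sqrt n\,\alpha$ of magnitude $O(\sqrt n)$; here I would sacrifice $n^{1-\delta}$ of the copies before mixing to obtain a rough estimate $\hat\alpha$ with $|\hat\alpha-\alpha|=O(n^{-(1-\delta)/2})$, apply $D^\dagger_{\sqrt n\,\hat\alpha}$ to recentre the mode so that its residual displacement is only $O(n^{\delta/2})$, store the recentred state in a Fock space of dimension $n^{\delta}$ (i.e.\ $\delta\log n$ qubits per real quadrature), and store $\sqrt n\,\hat\alpha$ classically to precision $n^{-\delta/2}$ ($(1/2)\log n$ bits per quadrature at leading order). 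The decoder rebuilds the background from the stored $\beta$, re-displaces the quantum residue, and inverts the mixing unitary.

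The error accounting for this half is the heart of the thermal result: Gaussian tail bounds must show that the recentred displacement exceeds the $n^{\delta/2}$ cutoff with negligible weight so that Fock truncation costs at most $O(n^{-\delta/2})$, while the $n^{-\delta/2}$ classical resolution of the coarse displacement caps the re-displacement error at the same order; summing these with the CLT term yields the advertised $\epsilon=O(n^{-\delta/2})$. For the qudit family $\{\rho_\theta^{\otimes n}\}$ I would insert Q-LAN in front of this machine: after spending $n^{1-\delta}$ copies to localise $\theta$ inside a shrinking neighbourhood where the local model is valid, Q-LAN \cite{LAN3} furnishes explicit encoding/decoding channels that map $\rho_\theta^{\otimes n}$, up to trace distance $O(n^{-\kappa(\delta)})$, onto a Gaussian state whose $f_q$ displacements encode $\xi$ and whose $f_c$ classical Gaussian registers encode $\mu$. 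Compressing this image by the thermal protocol and inverting the Q-LAN channels in the decoder gives total error $O(n^{-\kappa(\delta)})+O(n^{-\delta/2})$ and exactly the stated budget, with the restriction $\delta\in(0,2/9)$ chosen so that the localisation window and the Q-LAN exponent $\kappa(\delta)$ stay positive and compatible. I expect the genuine obstacle to sit here: one must prove that localisation succeeds with overwhelming probability, that the Gaussian image lands in the regime where the truncation analysis is valid uniformly over the neighbourhood, and that the two error exponents add rather than compound.

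Finally, for optimality I would argue by contradiction from Holevo's bound, treating the classical and quantum halves of the memory on the same footing through the bound on accessible information. Using the quantum Fisher information of the family one constructs a packing of $\set{\Theta}$ of cardinality $n^{(f_c+f_q)/2-o(1)}$ whose $n$-copy states are pairwise distinguishable with constant trace distance (two parameter values separated by $\gg n^{-1/2}$ yield nearly orthogonal $n$-copy states). An asymptotically faithful compressor must keep these states distinguishable inside the memory, so the memory must carry at least $[(f_c+f_q)/2-o(1)]\log n$ (qu)bits; a hybrid memory of size $[(f_c+f_q)/2-\delta']\log n$ with $\delta'>0$ violates this, ruling out faithfulness. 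This converse is the most routine part, the only care being the uniform control of the packing and the joint classical--quantum counting.
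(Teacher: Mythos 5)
Your overall route is the same as the paper's: solve the displaced-thermal case first by concentrating the displacement into one mode, spending a vanishing fraction of the copies on a heterodyne estimate, recentring, Fock-truncating the residue, and storing the estimate classically; then reduce the qudit case to this via tomography-based localisation plus Q-LAN; and prove the converse by a mesh/packing of $\set{\Theta}$ at spacing $\sim n^{-1/2}$ combined with Fano and the Holevo bound. The converse sketch and the memory accounting are essentially the paper's. However, there is a genuine gap in the step you yourself identify as ``the heart of the thermal result,'' namely the displacement bookkeeping and the error analysis of the recentre--truncate--re-displace cycle.

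The concentrated mode, after you sacrifice $n^{1-\delta}$ copies for estimation, carries displacement $\sqrt{n-n^{1-\delta}}\,\alpha$, not $\sqrt{n}\,\alpha$. Recentring it by $D^\dagger_{\sqrt{n}\hat\alpha}$ as you propose leaves a residual displacement $\sqrt{n-n^{1-\delta}}(\alpha-\hat\alpha)-\bigl(\sqrt{n}-\sqrt{n-n^{1-\delta}}\bigr)\hat\alpha$, whose second term has magnitude $\Theta\bigl(n^{1/2-\delta}|\alpha|\bigr)$; for every $\delta\in(0,2/9)$ this corresponds to a photon number $\Theta(n^{1-2\delta})\gg n^{\delta}$, so your truncation at $n^{\delta}$ photons collapses the state and the error is $\Theta(1)$, not $O(n^{-\delta/2})$. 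The recentring must instead be $D^\dagger_{\sqrt{n-n^{1-\delta}}\hat\alpha}$ (this is what the paper does). Even after that fix, a second, more conceptual omission remains: the stored signal has total amplitude $\sqrt{n-n^{1-\delta}}\,\alpha$, while the decoder must reproduce $\rho_{\alpha,\beta}^{\otimes n}$, i.e.\ effective amplitude $\sqrt{n}\,\alpha$. If the decoder simply re-displaces by the inverse of the recentring and inverts the mixing unitary, the output is $\rho_{\sqrt{1-n^{-\delta}}\,\alpha,\beta}^{\otimes n}$; the per-copy displacement error $\sim|\alpha|n^{-\delta}/2$ accumulates across the $n$ copies (already for $\beta=0$ the $n$-copy fidelity is $e^{-\Theta(n^{1-2\delta})}$), so the compression error tends to its maximum rather than to zero. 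Some compensation mechanism is mandatory: the paper inserts a quantum amplifier $\map{A}^{\gamma_n}$ with $\gamma_n=1/(1-n^{-\delta})$ in the decoder, at cost $O(n^{-\delta})$ via Eqs.\ (\ref{amp-output})--(\ref{thermal-property}); alternatively one can let the decoder re-displace by $\sqrt{n}\,\hat\alpha^\ast$ instead of $\sqrt{n-n^{1-\delta}}\,\hat\alpha^\ast$, leaving a residual error $\bigl(\sqrt{n}-\sqrt{n-n^{1-\delta}}\bigr)|\alpha-\hat\alpha^\ast|=O(n^{-\delta/2})$. Your error accounting contains no term for this amplitude deficit, and the same omission recurs in your qudit reduction, where the paper compensates for the copies consumed by tomography with the amplification channel $\map{A}^{(n/\gamma_n)\to n}_{\theta_0}$ (a classical rescaling of the Gaussian register together with per-mode quantum amplifiers) before the Gaussian compression step. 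Once these two points are repaired, the rest of your argument goes through along the paper's lines.
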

Theorem \ref{main} is a sort of ``equipartition theorem'',  stating that each independent parameter requires   a memory of size $(1/2+\delta)\log n$.  When the parameter is classical, the required memory is fully classical; when the parameter is quantum, a quantum memory of $\delta\log n$ qubits is required.
\iffalse
Theorem \ref{main} provides  the optimal compression protocols for many $n$-copy qudit state families, including:
\begin{enumerate}
\item The full quantum family containing states of the form $\rho_{\theta}^{\otimes n}$ where $\rho_\theta$ is an arbitrary non-degenerate qudit state.  In this case, one has $f_c=d-1$ and $f_q=d(d-1)$.
\item The full classical  family containing diagonal states of the form $\rho_{\mu}^{\otimes n}$ with $\mu_1>\mu_2>\cdots>\mu_d>0$.   In this case, one has $f_c=d-1$ and $f_q=0$.
Our result shows that probability distributions from this family can be compressed into $(d-1)/2  \log n$ classical bits, retrieving the result of  \cite{classical}.
\item The phase-covariant qudit family ($f_c=0$ and $f_q=d-1$), containing states of the form $\rho_{\phi}^{\otimes n}$ with $\rho_\phi$ generated by multi-phase evolutions of a fixed non-diagonal state:
\begin{align*}
\rho_{\phi}=U_{\phi}\rho_0 U_{\phi}^\dag\qquad U_{\phi}=\exp\left(\sum_{k=1}^{d-1}\phi_k |k\>\<k|\right)
\end{align*}
where $\{|k\>\}$ is an orthonormal basis. States from this family are frequently considered in quantum state estimation \cite{multiphase1,multiphase2}.
\end{enumerate}
\fi

\section{Compression of displaced thermal states}\label{sec-thermal}

In this section, we focus on  the compression of identically prepared displaced thermal states.
  We separately treat  eight possible cases,  corresponding to the possible combinations where  the three parameter $\beta$, $|\alpha|$ and $\varphi$ are either independent or fixed.   The total memory cost for each case is determined by Theorem \ref{main} and is  summarized in Table \ref{table}.   On the other hand, the errors for all  cases satisfy the unified bound
\begin{align}
\epsilon=O\left(n^{-\delta/2}\right).
\end{align}

\begin{table*}[!t]
\centering
\caption{\label{table} Compression rate for different state families. Here $\delta>0$ is an arbitrary positive constant.}
\begin{tabular}{lcccccccc}
Case &\quad & displacement $\alpha=|\alpha|e^{i\varphi}$ &\quad & thermal parameter $\beta$ & \quad & quantum bits &\quad &classical bits\\
\hline
 0 & & fixed & & fixed & & 0 & & 0\\
 1 & & fixed & & independent & &  0 & & $(1/2+\delta)\log n$\\
  2 & & independent & & fixed & & $2\delta\log n$ & & $\log n$\\
3 &  & independent & & independent & &$2\delta\log n$ & &$(3/2+\delta)\log n$ \\
4 & & $\varphi$ independent; $|\alpha|$ fixed & & fixed & & $\delta\log n$ & & $(1/2)\log n$ \\
5 & & $\varphi$ independent; $|\alpha|$ fixed & & independent & & $\delta\log n$ & & $(1+\delta)\log n$ \\
 6 & & $|\alpha|$ independent; $\varphi$ fixed & & fixed & & $\delta\log n$ & & $(1/2)\log n$\\
 7 & & $|\alpha|$ independent; $\varphi$ fixed & & independent & & $\delta\log n$ & & $(1+\delta)\log n$ \\

\hline
\end{tabular}
\end{table*}

\subsection{Quantum optical techniques used in  the compression protocols}
To construct compression protocols for the displaced thermal states, we adopt several tools in quantum optics. As a preparation, we introduce three quantum optical tools which are key components of the compression protocols: the beam splitter, the heterodyne measurement, and the quantum amplifier.
\begin{itemize}
\item{ \em Beam splitter.} A beam splitter is a linear optical device implementing the  unitary gate
    \begin{align}
    U_{\tau} = \exp\left[ i \tau (\hat{a}_1^\dag \hat{a}_2 + \hat{a}_1 \hat{a}_2^\dag)\right],
    \end{align}
 where   $\tau $ is a real parameter,
 %{\color{blue}whose cosine value is called the {\em transmittance}},
 and $\hat{a}_1$ and $\hat{a}_2$ are the annihilation operators associated to the two systems.

Beam splitters can be used to split or merge laser beams. For example, one can use a  beam splitter with $\tau=\pi/4$ to merge two identical coherent states into a single coherent state with larger amplitude:
    \begin{align}\label{yuanshiBS}
    U_{\pi/4} |\alpha\>_1 \otimes |\alpha\>_2 = |\sqrt2 \alpha\>_1 \otimes |0\>_2.
    \end{align}
    We will use beam splitters to manipulate the information about the parameter $\alpha$ in the  $n$-copy state $\rho_{\alpha,\beta}^{\otimes n}$.     In particular,  we  will  make frequent use of the beam splitter unitary  $U_\tau$ that implements the transformation
     \begin{align}\label{bizuBS}
    U_{\tau}\left(\rho_{\alpha_0,\beta}\otimes\rho_{\alpha_1,\beta}\right)U_{\tau}^\dag=\rho_{\alpha_2,\beta}\otimes\rho^{\rm thm}_{\beta}
    \end{align}
    where the displaced thermal states satisfy the relation $\alpha_0=|\alpha_0|e^{i\varphi}$, $\alpha_1=|\alpha_1|e^{i\varphi}$, and $\alpha_2=\sqrt{|\alpha_0|^2+|\alpha_1|^2}e^{i\varphi}$.
%Beam splitters also frequently appear in groups, implementing unitary gates on multiple optical modes.

\item{ \em Heterodyne measurement.} The heterodyne measurement (see e.g. Section 3.5.2 of \cite{busch1997operational}) is a common measurement in continuous variable quantum optics.  Here, the measurement outcome is a complex number $\hat \alpha$ and the corresponding  measurement operator is the projector on the coherent state $|\hat \alpha\>$.
The heterodyne POVM $\{ \frac{\d^2 \hat\alpha}{\pi} \,|\hat\alpha\>\<\hat\alpha| \}$  is normalized in such a way that the integral over the complex plane gives the identity operator, namely
 \begin{align}\int \frac{\d^2 \hat\alpha}{\pi} \ |\hat\alpha\>\<\hat\alpha| = I \, .
 \end{align}

    We will use heterodyne measurements to estimate the amount of displacement of a displaced thermal state. For a displaced thermal state $\rho_{\alpha,\beta}$, the conditional probability density of finding the outcome $\hat \alpha$ is
     \begin{align}
    Q(\hat{\alpha}|\alpha,\beta)=&~ \frac{1}{\pi}\<\hat\alpha|\rho_{\alpha,\beta}|\hat\alpha\> \nonumber\\
    =&~ \frac{1}{\pi}\<\hat\alpha-\alpha|\rho_\beta^{\rm thm}|\hat\alpha-\alpha\> \nonumber\\
    =&~ \frac{(1-\beta)}{\pi}\exp[-(1-\beta)|\hat{\alpha}-\alpha|^2]. \label{heterodyne}
    \end{align}
\item{ \em Quantum amplifier.} A quantum amplifier is a device that  increases the intensity of quantum light while preserving its phase information, namely a device which approximately implements the process $|\alpha\>\to |\gamma \alpha\>$ for $\gamma>1$. Quantum amplification is an analogue of approximate quantum cloning for finite-dimensional systems, since coherent or displaced thermal states can be merged and split in a reversible fashion [cf. Eqs. (\ref{yuanshiBS}), (\ref{bizuBS})]. In this work, we use the following amplifier \cite{amplifier}:
\begin{align}
\nonumber
\map{A}^{\gamma}(\rho)&=\Tr_B\Big[e^{\cosh^{-1}(\sqrt{\gamma}) (\hat{a}^\dag \hat{b}^\dag-\hat{a}\hat{b})}(\rho\otimes |0\>\<0|_B)   \\
  &  
 \qquad   \qquad \times e^{\cosh^{-1}(\sqrt{\gamma})(\hat{a}\hat{b}-\hat{a}^\dag \hat{b}^\dag)}\Big]  
  \label{amplifier}
\end{align}
where $\hat{a}$ and $\hat{b}$ are the annihilation operators of the input mode and the ancillary mode $B$, and $\gamma$ is the amplification factor.
\end{itemize}

We now show details of the compression protocol for each case.

\subsection{Case 1: fixed $\alpha$, independent $\beta$}

Let us  start from Case 1, where the thermal parameter  $\beta$ is the only  independent parameter. Note that the input state can be regarded as the state of $n$ optical modes with each mode in a displaced thermal state, and thus the compression protocol can be regarded as a sequence of operations on the $n$-mode system.

Since $\alpha$ is known, we can get rid of the displacement using a certain unitary and convert the displaced thermal states into (undisplaced) thermal states.
For the compression of thermal states, we have the following lemma:
\begin{Lemma}[Compression of identically prepared thermal states]\label{lemma-thermal}
Let $\{(\rho^{{\rm thm}}_\beta)^{\otimes n} \, \}_{\beta\in  [ \beta_{\min}, \beta_{\max}]}$ be a family of $n$-copy thermal states.  For any $\delta>0$, there exists a protocol $\left(\map{E}^{{\rm thm}}_{n,\delta},\map{D}^{{\rm thm}}_{n,\delta}\right)$ that compresses $n$ copies of a thermal state $\rho^{{\rm thm}}_\beta$ into $(1/2+\delta)\log n$ classical bits with error
\begin{align}\label{error-thermal}
\epsilon_{\rm thm}=O\left(n^{-\delta}\right).
\end{align}
\end{Lemma}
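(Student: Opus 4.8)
The plan is to reduce the problem to a purely classical one, then exploit the sufficiency of the total photon number together with a ``dithered'' quantization of that statistic.

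\emph{Reduction to a classical problem.} Since $\rho^{\rm thm}_\beta$ is diagonal in the number basis $\{|j\>\}$, the $n$-copy state $(\rho^{\rm thm}_\beta)^{\otimes n}$ is diagonal in the product number basis, and measuring the photon number of each mode returns an outcome $\vec n=(n_1,\dots,n_n)\in\N^n$ distributed according to the product of geometric laws
\begin{align}
P_\beta(\vec n)=\prod_{i=1}^n (1-\beta)\beta^{n_i}=(1-\beta)^n\beta^{N},\qquad N:=\sum_{i=1}^n n_i .
\end{align}
Because $P_\beta(\vec n)$ depends on $\vec n$ only through $N$, the total photon number $N$ is a sufficient statistic, and the conditional law of $\vec n$ given $N$ is the \emph{$\beta$-independent} uniform distribution over the compositions $\{\vec n:\sum_i n_i=N\}$. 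I would therefore set up the encoder $\map{E}^{\rm thm}_{n,\delta}$ as ``measure each mode, compute $N$, store a compressed description of $N$,'' and the decoder $\map{D}^{\rm thm}_{n,\delta}$ as ``resample $\vec n$ from the $\beta$-free conditional and prepare the corresponding product number state.'' Since both the input and the reconstructed state are diagonal in the same product number basis, the trace distance appearing in $\epsilon_{\rm thm}$ equals the total variation distance between the associated classical distributions, so it suffices to argue classically.

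\emph{The protocol.} Only $N$ must be stored. After discarding an event of exponentially small probability (uniformly in $\beta$, by a large-deviation bound for a sum of geometric variables), $N$ lies in $[0,L]$ with $L=O(n)$. I would then quantize: fix a bin width $\Delta\asymp n^{1/2-\delta}$, store the bin index $k=\lfloor N/\Delta\rfloor$, which costs $\log(L/\Delta)=(1/2+\delta)\log n+O(1)$ classical bits, and let the decoder \emph{dither}, i.e. draw $\tilde N$ uniformly from $\{k\Delta,\dots,(k+1)\Delta-1\}$ before resampling the composition. Dithering is the crucial ingredient: a decoder that deterministically reconstructed a fixed bin representative would produce a total photon number supported on multiples of $\Delta$, at total variation distance close to $1$ from the true law of $N$; re-randomizing inside the bin is what restores faithfulness.

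\emph{Error analysis.} Writing $g_\beta$ for the law of $N$ and $\tilde g_\beta$ for the law of the reconstructed $\tilde N$, the conditional laws $h(\cdot\mid N)$ attached to distinct values of $N$ have disjoint supports, so the reduction collapses the state error exactly to $\tfrac12\|g_\beta-\tilde g_\beta\|_1$. By construction $\tilde g_\beta$ is the bin-average of $g_\beta$, whence
\begin{align}
\tfrac12\|g_\beta-\tilde g_\beta\|_1\le \Delta\sum_{\text{bins}}\operatorname{osc}(g_\beta;\text{bin}),
\end{align}
and since the negative-binomial law $g_\beta$ is unimodal the oscillations telescope to $O(\max_N g_\beta)$. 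A local limit estimate gives $\max_N g_\beta=O(1/\sqrt n)$ uniformly for $\beta$ in a compact subinterval of $(0,1)$, so the error is $O(\Delta/\sqrt n)=O(n^{-\delta})$, as claimed. The main obstacle I anticipate is making these last estimates uniform in $\beta$: one needs the Gaussian width $\sqrt n\,\sigma(\beta)$ of $g_\beta$ to dominate the bin width $\Delta$, which forces $\beta$ to stay bounded away from $0$ and $1$ (so that $\sigma(\beta)$ is bounded below) and requires a quantitative local central limit theorem for the negative-binomial distribution rather than a bare CLT.
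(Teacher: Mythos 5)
Your proposal is correct, and the protocol itself is exactly the paper's: the paper's encoder measures the photon numbers and stores which of $\lfloor n^{1/2+\delta}\rfloor$ intervals the total $N$ falls in, and its decoder samples a value uniformly inside that interval and prepares the uniform mixture over all number states with that total --- precisely your dithered quantization of the sufficient statistic, including the in-bin re-randomization that you rightly identify as essential. What is genuinely different is the error analysis. The paper argues directly on the negative-binomial weights: it introduces a bulk set $\set{S}$ of intervals around the mean $\beta n/(1-\beta)$, bounds the probability outside $\set{S}$ by a CLT/$\erfc$ estimate, and inside $\set{S}$ bounds the relative in-bin oscillation $\bigl|\beta^{m}\binom{n+m-1}{m}/\bigl(\beta^{m'}\binom{n+m'-1}{m'}\bigr)-1\bigr|$. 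Your analysis replaces this bulk/tail split by a global absolute-oscillation bound: the distance to the bin-average is at most the bin width times the summed oscillations, which telescope to $O(\max_N g_\beta)=O(1/\sqrt{n})$ by log-concavity of the negative binomial (the ratio $g_\beta(N+1)/g_\beta(N)=\beta(n+N)/(N+1)$ is decreasing in $N$) together with a local-limit bound at the mode. This is not merely cosmetic: in the paper the two error terms pull the width of $\set{S}$ in opposite directions --- as written, $\set{S}$ has half-width $n^{(1-\delta)/2}=o(\sqrt{n})$, too narrow for the claimed $e^{-\Omega(n^{\delta})}$ tail bound, while the wider choice $n^{(1+\delta)/2}$ that fixes the tail weakens the in-bin ratio estimate --- so the paper's route really yields the claimed exponent only after some rebalancing; your bound $O(\Delta/\sqrt{n})=O(n^{-\delta})$ with $\Delta=n^{1/2-\delta}$ involves no such trade-off and is uniform over $\beta\in[\beta_{\min},\beta_{\max}]$. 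Both arguments, and the lemma itself, require $\beta$ bounded away from $0$ and $1$, as you flag; to finish your write-up you only need to make the two standard classical inputs quantitative (log-concavity, and the $O(1/\sqrt{n})$ bound on the mode probability via Stirling or a Kolmogorov--Rogozin concentration inequality), both of which are routine.
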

The proof of the above lemma can be found in the appendix. Note  that no quantum memory is required to encode thermal states, in agreement with the intuition that $\beta$ is classical, because all the states $\rho^{{\rm thm}}_\beta$ are diagonal in the same basis.

For any $\delta>0$, the compression protocol for Case 1 (fixed $\alpha$, independent $\beta$) is constructed as follows:
\begin{itemize}
\item \emph{Encoder.}
\begin{enumerate}
\item Transform  each input copy with the displacement operation $\map{D}_{-\alpha}$, defined by  $\map{D}_{-\alpha}(\cdot):={D}_{-\alpha}\cdot {D}_{-\alpha}^\dag$,
where $D_{-\alpha}=\exp(-\alpha \hat{a}^\dag+\bar{\alpha} \hat{a})$ is the displacement operator. The displacement operation transforms each input copy $\rho_{\alpha,\beta}$ into the thermal state $\rho^{{\rm thm}}_\beta$.
%\item The resultant state has the form $(\rho^{\rm thm}_\beta)^{\otimes n}$.
\item Apply the thermal state encoder $\map{E}^{{\rm thm}}_{n,\delta}$ in Lemma \ref{lemma-thermal} on the $n$-mode state and the outcome is encoded in a classical memory.
\end{enumerate}
\item \emph{Decoder.}
\begin{enumerate}
\item Use the thermal state decoder $\map{D}^{{\rm thm}}_{n,\delta}$ in Lemma \ref{lemma-thermal} to recover the $n$ copies of the thermal state $\rho^{{\rm thm}}_\beta$ from the classical memory. \item Perform the displacement operation $\map{D}_{\alpha}$ on each mode.
\end{enumerate}
\end{itemize}
Obviously, the memory cost and the error of the above protocol are given by Lemma \ref{lemma-thermal}.

\subsection{Case 2: fixed $\beta$, independent $\alpha$}
Next we study the case when the displacement $\alpha$ is independent, while the thermal parameter $\beta$ is fixed.
%Let us first look at the idea  before going into the details of the compression protocol.
The heuristic idea of the compression protocol is to gently test the input state, in order to extract information about the parameter $\alpha$.
 % One could consider  To save as much quantum memory as possible, we may consider to estimate {\color{red} estimating} $\alpha$ and store  {\color{red} storing} the outcome in a classical memory. However, any estimate of $\alpha$ comes at the price of disturbing the input state, and, as shown later in Section \ref{sec:classical}, the  distortion   {\color{red} disturbance} caused by measurements on all input copies is so large that the state cannot be recovered faithfully. Instead of measuring all the $n$ copies,
  The ``gentle test" is based on a heterodyne measurement, performed  on a small fraction of the $n$ input copies.   The information gained by the measurement is then used to perform suitable encoding operations on the remaining copies.
% Such an idea of

\begin{figure}  [b!]
\begin{center}
  \includegraphics[width=1\linewidth]{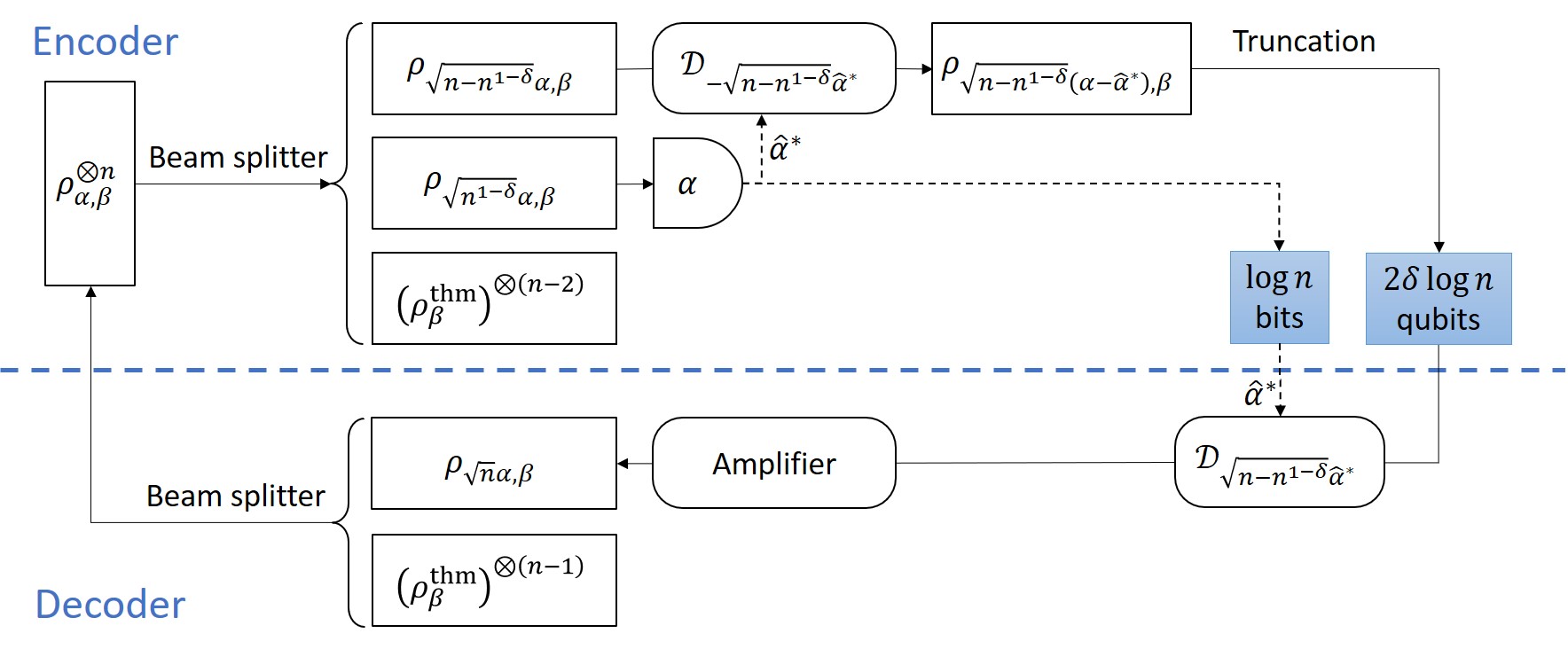}
  \end{center}
\caption{\label{fig-case2}
  {\bf Compression protocol for   displaced thermal states with fixed $\beta$  and independent $\alpha$. }  }
\end{figure}

 Let us see the details of the compression protocol. The key observation is that  $n$ identically displaced thermal states  are unitarily equivalent to a  single displaced thermal state, with the displacement scaled up by a factor $\sqrt n$, times the   product of   $n-1$ undisplaced thermal states. In formula, one has \cite{BS,BS2}
\begin{align}\label{bs}
\rho_{\alpha,\beta}^{\otimes n} =  U_{\rm BS}^\dag\,     \Big (  \rho_{\sqrt{n}\alpha,\beta}\otimes (\rho^{{\rm thm}}_{\beta})^{\otimes (n-1)}\Big)  \,  U_{\rm BS}\, \, ,
\end{align}
where $U_{\rm BS}$ is  a suitable unitary gate,  realizable with a circuit of  beam splitter gates  as in Eq.  (\ref{bizuBS}).
Using  Eq. (\ref{bs}), we can construct a protocol that  separately processes  the displaced thermal state $\rho_{\sqrt n \alpha, \beta}$ and the $n-1$ thermal modes, up to a gentle testing of the input state.

For any $\delta>0$, the protocol for Case 2 (fixed $\beta$, independent $\alpha$) runs as follows (see also Fig \ref{fig-case2} for a flowchart illustration):
\begin{itemize}
\item \emph{Preprocessing.} A preprocessing procedure is needed in order to store the estimate of $\alpha$: Divide the range of $\alpha$ into $n$ intervals, each labeled by a point $\hat{\alpha}_i$ in it,
%{\color{red}  the notation is confusing because later you will use the symbol $\hat \alpha$ for the estimate}
 so that $|\alpha'-\alpha''|=O(n^{-1/2})$ (note that $\alpha$ is complex) for any $\alpha',\alpha''$ in the same interval.
\item \emph{Encoder.}
\begin{enumerate}
\item Perform the unitary channel
$\map{U}_{\rm BS}(\cdot)=U_{\rm BS}\,\cdot\,U_{\rm BS}^\dag$
on the input state, where $U_{\rm BS}$ is the unitary defined by Eq. (\ref{bs}).
The output state has the form $\rho_{\sqrt{n}\alpha,\beta}\otimes (\rho^{{\rm thm}}_{\beta})^{\otimes (n-1)}$.
\item Send the first and the last mode through a group of beam splitters (\ref{bizuBS}) that implements the transformation $\rho_{\alpha,\beta}\otimes\rho_{\beta}^{\rm thm}\to\rho_{\sqrt{n-n^{1-\delta}}\alpha,\beta}\otimes\rho_{\sqrt{n^{1-\delta}}\alpha,\beta}$. The $n$-mode state is now $\rho_{\sqrt{n-n^{1-\delta}}\alpha,\beta}\otimes (\rho^{{\rm thm}}_{\beta})^{\otimes (n-2)}\otimes\rho_{\sqrt{n^{1-\delta}}\alpha,\beta}$.
\item Estimate $\alpha$ by performing   the heterodyne measurement $\{\frac{\d^2\alpha'}{\pi}|\alpha'\>\<\alpha'|\}$ on the last mode.   If the measurement outcome is $\alpha'$,  use the value   $\hat{\alpha}=\alpha'/\sqrt{n^{1-\delta}}$ as an estimate for the displacement $\alpha$.
The conditional probability distribution of the estimate is $Q(\hat{\alpha}|\alpha,\beta)=\frac{(1-\beta)}{\pi}\exp[-(1-\beta)n^{1-\delta}|\hat{\alpha}-\alpha|^2]$ [
cf. Eq. (\ref{heterodyne})].
\item Encode the label $\hat{\alpha}^\ast$ of the interval  containing $\hat{\alpha}$ in a classical memory.
\item Displace the first mode with $\map{D}_{-\sqrt{n-n^{1-\delta}}\hat{\alpha}^\ast}$.
\item Truncate the state of the first mode in the photon number basis. The truncation is described by the channel $\map{P}_{n^{2\delta}}$:
\begin{align}\label{P-em}
\map{P}_{n^{2\delta}}(\rho)={P}_{n^{2\delta}}\rho {P}_{n^{2\delta}}+(1-\Tr[{P}_{n^{2\delta}}\rho])|0\>\<0|
\end{align}
where
\begin{align}
{P}_{n^{2\delta}}&=\sum_{m=0}^{n^{2\delta}}|m\>\<m|\label{p-channel-unknown}.
\end{align}
The output state on the first mode is encoded in a quantum memory.
\end{enumerate}

\item \emph{Decoder.}
\begin{enumerate}
\item Read $\hat{\alpha}^\ast$ and perform the displacement operation $\map{D}_{\sqrt{n-n^{1-\delta}}\hat{\alpha}^\ast}$ on the state of the quantum memory.
\item  Apply a quantum amplifier  $\map{A}^{\gamma_n}$ (\ref{amplifier}) with 
\begin{align}\label{gamma-n}
\gamma_n=\frac{1}{1-n^{-\delta}}
\end{align}
  to the output. %state so as to compensate the loss of the {\color{yellow} $n^\delta$ modes } {\color{blue} amount of displacement} in measurement.
\item Prepare  $(n-1)$ modes in the thermal state $\rho^{{\rm thm}}_\beta$, and perform on all the $n$ modes the unitary channel $\map{U}^{-1}_{\rm BS}$:
\begin{align}
\map{U}_{\rm BS}^{-1}(\rho)=U^\dag_{\rm BS}\,\rho\, U_{\rm BS}.
\end{align}
\end{enumerate}
\end{itemize}
%(note that only the leading order matters since $\delta$ can be made arbitrarily small)
The total memory cost consists of two parts: $\log n$ bits for encoding the (rounded) value $\hat{\alpha}^\ast$ of the estimate and $2\delta\log n$ qubits for encoding the first mode (in a displaced thermal state).
%Overall, the protocol requires $2\delta\log n$ qubits and $\log n$ classical bits.

%On the other hand, the error of the protocol can be split into three terms as {\color{red} add explanations to the error terms. ref for Q: Eq. (\ref{heterodyne}). ref for definition of error Eq. (\ref{errordefinition})}

Let us analyze the error of the protocol. To upper bound the error, we first note that, with high probability, our estimate $\hat\alpha$ is close to the correct value, say $|\hat \alpha  -  \alpha| \le  f(n)$ for some function $f$ vanishing for large $n$.     When  this happens,  we can bound the error introduced by the truncation $\map{P}_{n^{2\delta}}$ and by the amplification $\map{A}^{\gamma_n}$.  Otherwise,  we just use the trivial error bound $ \|\rho_{\alpha,\beta}^{\otimes n}-\map{D}_{n}\circ\map{E}_{n}(\rho_{\alpha,\beta}^{\otimes n})  \Big\|_1  \le 2$.
  In this way, we obtain the bound
  \begin{align}
 \nonumber  \epsilon &= \sup_{\alpha,\beta}\frac12  \Big \|\rho_{\alpha,\beta}^{\otimes n}-\map{D}_{n}\circ\map{E}_{n}\left(\rho_{\alpha,\beta}^{\otimes n}\right)  \Big\|_1 \, \\
\nonumber &\le \frac12    \sup_{\alpha,\beta}  \Big\{ \sup_{\hat{\alpha}^\ast:|\hat{\alpha}^\ast-\alpha|\le f(n)}    \\
\nonumber & \qquad  \quad  \Big\|\map{A}^{\gamma_n} \circ     \map{D}_{\sqrt{n-n^{1-\delta}}\hat{\alpha}^\ast} \circ   \map{P}_{n^{2\delta}}    \left(\rho_{\sqrt{n-n^{1-\delta}}(\alpha-\hat{\alpha}^\ast),\beta}\right) \\
& \qquad \quad -\rho_{\sqrt{n}\alpha,\beta}\Big\|_1\Big\}  +  {P}    (\alpha,\beta, n) \, ,
 \label{equazione}
 \end{align}
 where  $P  (\alpha,\beta,n)$ is the probability that $\hat \alpha$ deviates from $\alpha$ by more than $f(n)$, given by
   \begin{align}
\nonumber    &  P  (\alpha,\beta,n)    =   \int_{|\hat{\alpha}-\alpha|>f(n)} \,  \d^2 \hat \alpha  \,   Q( \hat{\alpha}|\alpha,\beta) \\
  &  =     \int_{|\hat{\alpha}-\alpha|>f(n)} \,  \frac{\d^2 \hat \alpha}{\pi}  \,      (1-\beta)\exp[-(1-\beta)n^{1-\delta}|\hat{\alpha}-\alpha|^2]  \, ,
   \end{align}
having used Eq. (\ref{heterodyne}) in the second equality.
    At this point, it is convenient to set $f(n)  = n^{-1/2+3\delta/4}$, so that we obtain the relation
   \begin{align}
\nonumber  &P  (\alpha,  \beta, n)\\
\nonumber &=\int_{|\hat{\alpha}-\alpha|>n^{-1/2+3\delta/4}} \,   \,  \frac{\d^2 \hat \alpha}{\pi}  \,(1-\beta)\exp[-(1-\beta)n^{1-\delta}|\hat{\alpha}-\alpha|^2]\\
& =e^{-\Omega(n^{\delta/2})}.
\end{align}
Inserting this relation in Eq. (\ref{equazione}), we obtain the bound
\begin{align}
\nonumber   &\epsilon  \le \frac12    \sup_{\alpha,\beta}  \Big\{ \sup_{\hat{\alpha}^\ast:|\hat{\alpha}^\ast-\alpha|\le f(n)}  \\
\nonumber   &  \qquad  \Big\|\map{A}^{\gamma_n}   \circ \map{D}_{\sqrt{n-n^{1-\delta}}\hat{\alpha}^\ast}  \circ\map{P}_{n^{2\delta}}    \left(\rho_{\sqrt{n-n^{1-\delta}}(\alpha-\hat{\alpha}^\ast),\beta}\right)\\
  \label{altraequazione}
   &\qquad -\rho_{\sqrt{n}\alpha,\beta}\Big\|_1\Big\}  +  e^{-\Omega(n^{\delta/2})}  \, .
\end{align}
Now, we have to bound the first term in the right hand side.
To this purpose, we split it into two terms, as  follows
\begin{align}
   \nonumber    &\left\|\map{A}^{\gamma_n}   \circ\map{D}_{\sqrt{n-n^{1-\delta}}\hat{\alpha}^\ast}  \circ\map{P}_{n^{2\delta}}    \left(\rho_{\sqrt{n-n^{1-\delta}}(\alpha-\hat{\alpha}^\ast),\beta}\right)  -\rho_{\sqrt{n}\alpha,\beta}\right\|_1      \\
    \nonumber
   & \quad \le
     \Big\|\map{A}^{\gamma_n}   \circ \map{D}_{\sqrt{n-n^{1-\delta}}\hat{\alpha}^\ast} \circ \map{P}_{n^{2\delta}}    \left(\rho_{\sqrt{n-n^{1-\delta}}(\alpha-\hat{\alpha}^\ast),\beta}\right) \\
  \nonumber   &\qquad  \quad  -  \map{A}^{\gamma_n}    \circ\map{D}_{\sqrt{n-n^{1-\delta}}\hat{\alpha}^\ast}      \left(  \rho_{\sqrt{n-n^{1-\delta}}(\alpha-\hat{\alpha}^\ast),\beta}\right)   \Big\|_1       \\
   \nonumber & \qquad   \quad + \left\|  \map{A}^{\gamma_n} \circ   \map{D}_{\sqrt{n-n^{1-\delta}}\hat{\alpha}^\ast}   \left(   \rho_{\sqrt{n-n^{1-\delta}}(\alpha-\hat{\alpha}^\ast),\beta} \right)  -\rho_{\sqrt{n}\alpha,\beta}\right\|_1    \\
     \nonumber &\le          \Big\|    \map{P}_{n^{2\delta}}    \left(\rho_{\sqrt{n-n^{1-\delta}}(\alpha-\hat{\alpha}^\ast),\beta}\right) -     \rho_{\sqrt{n-n^{1-\delta}}(\alpha-\hat{\alpha}^\ast),\beta}  \Big\|_1  \\
     &  \quad  + \left\|  \map{A}^{\gamma_n}     \left(   \rho_{\sqrt{n-n^{1-\delta}}\alpha ,\beta}   \right) -\rho_{\sqrt{n}\alpha,\beta}\right\|_1       \, .  \label{twomoreterms}
\end{align}
The two terms can be upper bounded individually.   For the first term, we use the relations
 \begin{align}\label{amp-output}
&\map{A}^{\gamma}\left(\rho_{\alpha,\beta}\right)=\rho_{\sqrt{\gamma}\alpha,\beta'}\qquad \beta'=\frac{\beta+\gamma-1}{\gamma}
\end{align}
and
\begin{align}\label{thermal-property}
\left\|\rho^{{\rm thm}}_{\beta'}-\rho^{{\rm thm}}_{\beta}\right\|_1&\le\frac{2|\beta'-\beta|}{(1-\beta')^2}+O(|\beta'-\beta|^2) \, ,
\end{align}
proven in Appendices \ref{app-amp} and \ref{app-trivial}, respectively.

Using these two relations and Eq. (\ref{gamma-n}), we obtain the bound
\begin{align}
\nonumber &\frac12\sup_{\alpha,\beta}\sup_{\hat{\alpha}^\ast:|\hat{\alpha}^\ast-\alpha|\le n^{-1/2+3\delta/4}}\left\|\map{A}^{\gamma_n}\left(\rho_{\sqrt{n-n^{1-\delta}}\alpha,\beta}\right)-\rho_{\sqrt{n}\alpha,\beta}\right\|_1  \\
\label{mancante}&=O(n^{-\delta}) \, .
\end{align}

The first term in the right hand side of Eq. (\ref{twomoreterms}) can be bounded with the following lemma:
\begin{Lemma}[Photon number truncation of displaced thermal states.]\label{lemma-truncation}
Define the channel $\map{P}_{K}$ as
\begin{align}\label{trun-P-alpha}
\map{P}_{K}(\rho)=P_{K}\rho P_{K}+(1-\Tr[P_{K}\rho])|0\>\<0|
\end{align}
where $P_{K}=\sum_{k=0}^K|k\>\<k|$.
When $K=\Omega\left(|\alpha|^{2+x}\right)$, $\map{P}_{K}$ satisfies
\begin{align}\label{error-coherent-known}
\epsilon(\rho_{\alpha,\beta}):=\frac12\left\|\map{P}_{K}\left(\rho_{\alpha,\beta}\right)-\rho_{\alpha,\beta}\right\|_1=\beta^{\Omega(K^{x/8})}+e^{-\Omega(K^{x/4})}
\end{align}
for any $0\le\beta<1$.
\end{Lemma}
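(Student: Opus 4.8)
The plan is to reduce the trace-distance error to a single scalar quantity---the weight $p_{\rm out}:=\Tr[(I-P_K)\,\rho_{\alpha,\beta}]$ that the displaced thermal state places above photon number $K$---and then to estimate that weight by representing $\rho_{\alpha,\beta}$ as a Gaussian mixture of coherent states.

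First I would carry out the reduction. Writing $p_{\rm out}=\Tr[(I-P_K)\rho_{\alpha,\beta}]$, the triangle inequality gives $\|\map{P}_K(\rho_{\alpha,\beta})-\rho_{\alpha,\beta}\|_1\le \|\rho_{\alpha,\beta}-P_K\rho_{\alpha,\beta}P_K\|_1+p_{\rm out}$, and the gentle-measurement lemma (applied to the projector $P_K$, for which $\Tr[P_K\rho_{\alpha,\beta}]=1-p_{\rm out}$) bounds the first term by $2\sqrt{p_{\rm out}}$. Hence $\epsilon(\rho_{\alpha,\beta})\le 2\sqrt{p_{\rm out}}$. Since the square root preserves the form of the bound (it only halves the constants hidden inside the $\Omega$'s), it suffices to prove $p_{\rm out}=\beta^{\Omega(K^{x/8})}+e^{-\Omega(K^{x/4})}$.

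The core estimate uses the $P$-representation $\rho^{\rm thm}_\beta=\tfrac1{\pi\bar n}\int d^2\gamma\, e^{-|\gamma|^2/\bar n}\,|\gamma\rangle\langle\gamma|$ with $\bar n=\beta/(1-\beta)$, so that $\rho_{\alpha,\beta}=\tfrac1{\pi\bar n}\int d^2\gamma\, e^{-|\gamma|^2/\bar n}\,|\alpha+\gamma\rangle\langle\alpha+\gamma|$ is a Gaussian cloud of coherent states centred at $\alpha$. Since a coherent state $|\mu\rangle$ has Poissonian photon statistics of mean $|\mu|^2$, one gets $p_{\rm out}=\tfrac1{\pi\bar n}\int d^2\gamma\, e^{-|\gamma|^2/\bar n}\,T_K(|\alpha+\gamma|^2)$, where $T_K(\lambda)=\sum_{m>K}e^{-\lambda}\lambda^m/m!$ is a Poisson upper tail. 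I would then split the integral at $|\gamma|^2=s$. On the outer region $|\gamma|^2>s$ the crude bound $T_K\le1$ leaves only the Gaussian weight $\int_{|\gamma|^2>s}\tfrac1{\pi\bar n}e^{-|\gamma|^2/\bar n}d^2\gamma=e^{-s/\bar n}$; using the elementary inequality $\beta|\ln\beta|/(1-\beta)\le1$ this is at most $\beta^{\Omega(s)}$ uniformly in $\beta\in[0,1)$, which yields the first term once $s=\Omega(K^{x/8})$. On the inner region $|\gamma|^2\le s$ the amplitude obeys $|\alpha+\gamma|^2\le(|\alpha|+\sqrt s)^2$; because the hypothesis $K=\Omega(|\alpha|^{2+x})$ forces $|\alpha|^2=O(K^{2/(2+x)})=o(K)$, the Poisson mean stays well below $K$ and the Chernoff bound for $T_K$ produces the second, exponentially small term. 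A convenient choice of the split radius then reproduces the stated exponents $K^{x/8}$ and $K^{x/4}$ (the exponential term is in fact far smaller than written, but this loose form is all that the downstream application needs).

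The delicate point is uniformity in $\beta$. As $\beta\to1$ the thermal occupation $\bar n$ diverges, the Gaussian cloud of amplitudes widens, and the outer-region term dominates; the split radius $s$ must therefore be tied to $K$ rather than to $\bar n$, and the passage from $e^{-s/\bar n}$ to $\beta^{\Omega(\cdot)}$ must go through the monotonicity estimate $\beta|\ln\beta|/(1-\beta)\le1$ so that the bound degrades gracefully instead of blowing up. As $\beta\to0$ the same expression should collapse to the pure Poisson tail of a coherent state, which the inner-region estimate already controls. Apart from the Chernoff estimate of $T_K$, the only work is the bookkeeping that keeps every constant independent of $\beta$ and of $\alpha$; everything else is routine.
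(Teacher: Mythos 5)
Your proposal is correct, and its core estimate takes a genuinely different route from the paper's. Both arguments begin identically: the gentle-measurement reduction bounds the error by $O\bigl(\sqrt{p_{\rm out}}\bigr)$ with $p_{\rm out}=1-\Tr[P_K\rho_{\alpha,\beta}]$ (the paper gets the constant $3/2$, you get $2$; immaterial). The paper then bounds $p_{\rm out}$ entirely in the photon-number basis: it writes the thermal state as a geometric mixture over number states $|l\rangle$, splits the sum at $l_0=K^{x/8}$ (the geometric tail gives $\beta^{l_0}$), and for $l\le l_0$ must estimate the displaced-number-state distribution $|\langle l|D_\alpha|k\rangle|^2$ by an explicit computation before invoking a Poisson tail bound --- that matrix-element estimate is the bulk of the paper's proof. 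You instead use the Glauber--Sudarshan $P$-representation, writing $\rho_{\alpha,\beta}$ as a Gaussian mixture of coherent states of width $\bar n=\beta/(1-\beta)$ centred at $\alpha$, and split the Gaussian integral at radius $\sqrt s$ with $s=K^{x/8}$. Your outer term $e^{-s/\bar n}\le\beta^{s}$ (via $\beta|\ln\beta|\le 1-\beta$, which indeed holds on $(0,1)$) is the continuous counterpart of the paper's geometric tail $\beta^{l_0}$, and your inner term needs only the Poissonian photon statistics of coherent states with mean at most $(|\alpha|+\sqrt s)^2=o(K)$ plus a Chernoff bound, yielding $e^{-\Omega(K)}$ --- in fact stronger than the stated $e^{-\Omega(K^{x/4})}$ and stronger than what the paper's route produces. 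What your approach buys is the elimination of the displaced-number-state computation in favour of completely standard ingredients; what it costs is the degeneracy of the $P$-function at $\beta=0$, where the mixture collapses to a point mass and the claim reduces to the bare coherent-state Poisson tail (as you note), whereas the paper's number-basis argument covers $\beta=0$ with no case distinction. Both proofs share the same implicit requirement that $x$ not be large (so that $K^{x/8}=o(K)$ and the inner Poisson mean stays below $K$), which is harmless in the regime $x\in(0,2/9)$ where the lemma is actually applied.
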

See Appendix \ref{proof-lemma-truncation} for the proof.

In our case,  we are using the projector $P_{n^{2\delta}}$ in Eq. (\ref{p-channel-unknown}), and the displacement is $\sqrt{n-n^{1-\delta}}(\alpha-\hat{\alpha}^\ast)$. Since $\sqrt{n-n^{1-\delta}}|\alpha-\hat{\alpha}^\ast|=O\left( n^{3\delta/4}\right)$, by Lemma \ref{lemma-truncation} we obtain the bound
\begin{align}
\nonumber   &\frac12\sup_{\alpha,\beta}\sup_{\hat{\alpha}^\ast:|\hat{\alpha}^\ast-\alpha|\le n^{-1/2+3\delta/4}}
\Big\|\map{P}_{n^{2\delta}}(\rho_{\sqrt{n-n^{1-\delta}}(\alpha-\hat{\alpha}^\ast),\beta})  \\
  \nonumber & \qquad \qquad\qquad  \qquad \qquad \qquad  -\rho_{\sqrt{n-n^{1-\delta}}(\alpha-\hat{\alpha}^\ast),\beta}\Big\|_1\\
\label{basta}
  &=\beta^{\Omega\left(n^{\delta/12}\right)}+e^{-\Omega(n^{\delta/6})}  \, .
\end{align}

Combining Eqs.   (\ref{altraequazione}), %(\ref{twomoreterms}),
(\ref{mancante}), and (\ref{basta}),  we finally get the error bound
\begin{align}
\nonumber \epsilon&  \le   e^{-\Omega(n^{\delta/2})}  +  O\left(n^{-\delta}\right)   +  \beta^{\Omega\left(n^{\delta/12}\right)}+e^{-\Omega(n^{\delta/6})}     \\
&  =   O\left(n^{-\delta}\right)   .  \label{case2error}
\end{align}

\subsection{Case 3: independent $\alpha$ and $\beta$}\label{subsec:case1}
Case 3 (independent $\alpha$ and $\beta$) can be treated in a similar way as Case 2. The main difference is that, since one mode is consumed in the estimation of $\alpha$, we have to estimate also $\beta$ to reconstruct this mode.
Luckily,  the thermal parameter $\beta$ can be estimated freely (i.e. without disturbing the input state), and thus its estimation strategy is simpler than that of $\alpha$.

\begin{figure}  [b!]
\begin{center}
  \includegraphics[width=1\linewidth]{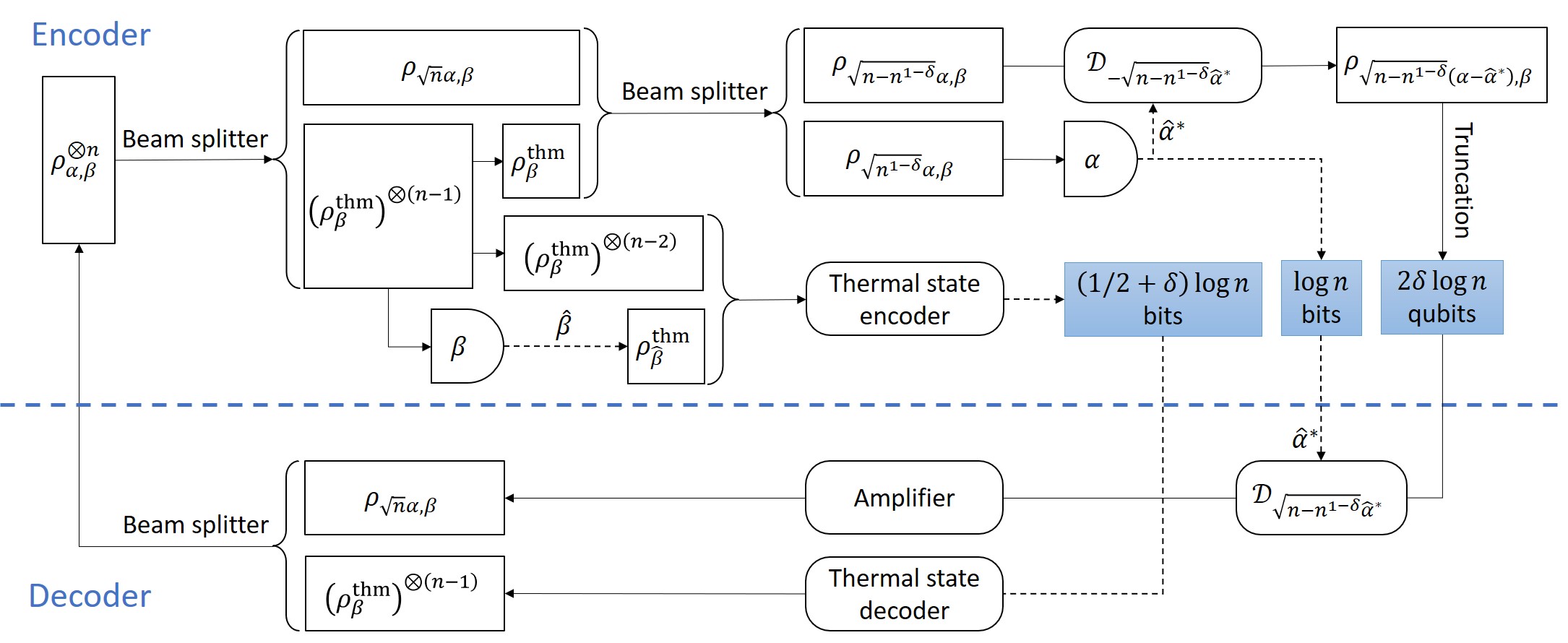}
  \end{center}
\caption{\label{fig-case3}
  {\bf Compression protocol for displaced thermal states with independent $\alpha$ and  $\beta$.}  }
\end{figure}

For any $\delta>0$ we can construct the protocol for Case 3 (independent $\alpha$ and $\beta$) as follows (see also Fig. \ref{fig-case3} for a flowchart illustration):
\begin{itemize}
\item \emph{Preprocessing.} Divide the range of $\alpha$ into $n$ intervals, each labeled by a point $\hat{\alpha}_i$ in it, so that $|\alpha'-\alpha''|=O(n^{-1/2})$  for any $\alpha',\alpha''$ in the same interval.
\item \emph{Encoder.}
\begin{enumerate}
\item
Perform the unitary channel $\map{U}_{\rm BS}(\cdot)=U_{\rm BS}\,\cdot\,U_{\rm BS}^\dag$ on the input state, where $U_{\rm BS}$ is the unitary defined by Eq. (\ref{bs}).
The output state has the form $\rho_{\sqrt{n}\alpha,\beta}\otimes (\rho^{{\rm thm}}_{\beta})^{\otimes (n-1)}$.
\item Estimate $\beta$ with the von Neumann measurement of the photon number on the $n-1$ copies of $\rho^{{\rm thm}}_{\beta}$ and denote by $\hat{\beta}$ the maximum likelihood estimate of $\beta$. Note that  the $n-1$ copies will not be disturbed   by the photon number measurement because they are diagonal in the photon number basis.
\item Send the first and the last mode through a group of beam splitters (\ref{bizuBS}) that implements the transformation $\rho_{\alpha,\beta}\otimes\rho_{\beta}^{\rm thm}\to\rho_{\sqrt{n-n^{1-\delta}}\alpha,\beta}\otimes\rho_{\sqrt{n^{1-\delta}}\alpha,\beta}$. The $n$-mode state is now  $\rho_{\sqrt{n-n^{1-\delta}}\alpha,\beta}\otimes (\rho^{{\rm thm}}_{\beta})^{\otimes (n-2)}\otimes\rho_{\sqrt{n^{1-\delta}}\alpha,\beta}$.
\item Estimate $\alpha$ by performing the heterodyne measurement $\{\frac{\d^2\alpha'}{\pi}|\alpha'\>\<\alpha'|\}$ on the last mode, which yields an estimate $\hat{\alpha}=\alpha'/\sqrt{n^{1-\delta}}$ with the probability distribution $Q(\hat{\alpha}|\alpha, \beta)$ as in Eq. (\ref{heterodyne}). Encode the label $\hat{\alpha}^\ast$ of the interval   containing $\hat{\alpha}$ in a classical memory.
\item Displace the first mode with $\map{D}_{-\sqrt{n-n^{1-\delta}}\hat{\alpha}^\ast}$.
\item Prepare the $n$-th mode in the thermal state $\rho^{{\rm thm}}_{\hat{\beta}}$. The $n$-mode state is now $\rho_{\sqrt{n-n^{1-\delta}}(\alpha-\hat{\alpha}^\ast),\beta}\otimes (\rho^{{\rm thm}}_{\beta})^{\otimes (n-2)}\otimes\rho^{{\rm thm}}_{\hat{\beta}}$.
\item Truncate the state of the first mode, using the channel $\map{P}_{n^{2\delta}}$ defined by Eq. (\ref{trun-P-alpha}).
The output state is encoded in a quantum memory.
\item Use the thermal state encoder $\map{E}^{{\rm thm}}_{n-1,\delta}$ (see Lemma \ref{lemma-thermal}) to compress the remaining $n-1$ modes and encode the output state in a classical memory.
\end{enumerate}
\item \emph{Decoder.}
\begin{enumerate}
\item Read $\hat{\alpha}^\ast$ and perform the displacement $\map{D}_{\sqrt{n-n^{1-\delta}}\hat{\alpha}^\ast}$ on the state of the quantum memory.
\item Apply a quantum amplifier (\ref{amplifier}) $\map{A}^{\gamma_n}$ with $\gamma_n=1/(1-n^{-\delta})$ to the state.
\item Use the thermal state decoder $\map{D}^{{\rm thm}}_{n-1,\delta}$ to recover the other $(n-1)$ modes in the thermal state $\rho^{{\rm thm}}_\beta$ from the memory.
\item Perform the channel $\map{U}^{-1}_{\rm BS}$.
\end{enumerate}
\end{itemize}
%(note that only the leading order matters since $\delta$ can be made arbitrarily small)
The memory cost of the protocol consists of three parts: $\log n$ bits for encoding the (rounded) value $\hat{\alpha}^\ast$ of the estimate, $2\delta\log n$ qubits for encoding the first mode (displaced thermal state), and $(1/2+\delta)\log n$ bits for encoding the other modes (thermal states).
Overall, the protocol requires $2\delta\log n$ qubits and $(3/2+\delta)\log n$ classical bits.

%On the other hand, the error of the protocol can be split into several terms as

On the other hand, the error of the protocol can be analyzed in a similar way as  in Case 2, with the only difference that an extra error is introduced by estimating and compressing the thermal states. The state of the modes after the estimation step is
\begin{align}  \rho^{\rm est}_\beta  =  \left(  \int\d\hat{\beta}\,P( \hat{\beta}|\beta)\,\rho^{{\rm thm}}_{\hat{\beta}}\right)\otimes(\rho^{{\rm thm}}_{\beta})^{\otimes (n-2)}   \, .
 \end{align}
 where $P(\hat{\beta}|\beta)$ is the probability density of estimating $\hat \beta$ when the true value is  $\beta$.
   Applying the thermal state compression to this state, we obtain the output state $\map{D}^{{\rm thm}}_{n,\delta} \circ \map{E}^{{\rm thm}}_{n,\delta}   (  \rho^{\rm est}_\beta)$, whose distance from the initial state can be bounded as
\begin{align}
\nonumber \epsilon_\beta   & :  =   \frac12   \,   \left\|\map{D}^{{\rm thm}}_{n,\delta} \circ \map{E}^{{\rm thm}}_{n,\delta}           (  \rho^{\rm est}_\beta)  - \left( \rho_\beta^{{\rm thm}} \right)^{\otimes (n-1)}  \right\|_1  \\
\nonumber &\le\frac12\sup_{\beta}\Big\{\Big\|  \map{D}^{{\rm thm}}_{n,\delta} \circ \map{E}^{{\rm thm}}_{n,\delta} \left(\rho^{{\rm est}}_{\beta}\right)    \\
\nonumber & \quad   -  \map{D}^{{\rm thm}}_{n,\delta} \circ \map{E}^{{\rm thm}}_{n,\delta}\left[(\rho^{{\rm thm}}_{\beta})^{\otimes (n-1)}\right]    \Big\|_1    \\
 \nonumber  & \quad    +    \left\|\map{D}^{{\rm thm}}_{n,\delta} \circ \map{E}^{{\rm thm}}_{n,\delta}\left[(\rho^{{\rm thm}}_{\beta})^{\otimes (n-1)}\right]-(\rho^{{\rm thm}}_{\beta})^{\otimes (n-1)}\right\|_1\Big\}  \\
&\le\frac12\sup_{\beta}  \left\| \rho^{{\rm est}}_{\beta}     - (\rho^{{\rm thm}}_{\beta})^{\otimes (n-1)}    \right\|_1     +  O\left(n^{-\delta}\right)  \, , \label{epsilonbeta}
\end{align}
having used  Lemma \ref{lemma-thermal} in the last inequality.
The remaining term can be bounded as
\begin{align}\label{ancora}
 \left\| \rho^{{\rm est}}_{\beta}     - (\rho^{{\rm thm}}_{\beta})^{\otimes (n-1)}    \right\|_1 &
 \le  \left\|
  \int\,\d\hat{\beta}P(\hat{\beta}|\beta)\, \Big (  \rho^{{\rm thm}}_{\hat{\beta}}  -\rho^{{\rm thm}}_{\beta}  \Big) \right\|_1  \, .
  \end{align}
Now, we split the integral in the right hand side of Eq. (\ref{ancora}) into two terms, corresponding to the values of $\hat \beta$  in regions  $\set R _\le  :  =  \{  \hat \beta \in \C ~|~   |\hat \beta- \beta|  \le n^{-(1+\delta)/2} )\}$ and $\set R_>  =  \C  \setminus \set R_\le$.    In this way, we obtain the bound 
\begin{align}
\nonumber   & \left\|
  \int\,\d\hat{\beta} P(\hat{\beta}|\beta)\,\rho^{{\rm thm}}_{\hat{\beta}}  -\rho^{{\rm thm}}_{\beta}\right\|_1 \\
  \nonumber  \le& \sup_\beta\sup_{\hat{\beta}\in\set{R}_\le}\left\|\rho^{{\rm thm}}_{\hat{\beta}}-\rho^{{\rm thm}}_{\beta}\right\|_1 + 2\int_{\set{R}_>}\, \d\hat{\beta} P(\hat{\beta}|\beta) \, ,
\end{align}
having used the elementary inequality  $\left\|\rho^{{\rm thm}}_{\hat{\beta}}-\rho^{{\rm thm}}_{\beta}\right\|_1\leq 2$. The first term  in the right hand side is bounded by $O(n^{-(1+\delta)/2})$ using Eq. (\ref{thermal-property}),
while the second error term is bounded by the following property of the maximum likelihood estimate \cite{van-book}
\begin{align}
\int_{|\hat{\beta}-\beta|\ge l/\sqrt{n F_\beta}}\, \d\hat{\beta} P(\hat{\beta}|\beta)\le\erfc\left(\frac{l}{\sqrt{2}}\right)
\end{align}
where $F_\beta=(\beta^2+1)/[\beta(1-\beta)^3]$ is the Fisher information of $\beta$ and $\erfc(x):=(2/\pi)\int_x^{\infty}e^{-s^2}\d s$ is the complementary error function. Picking $l=n^{-\delta/2}\sqrt{F_\beta}$, we have
\begin{align}
\int_{\set{R}_>}\, \d\hat{\beta} P(\hat{\beta}|\beta)\le\erfc\left(\frac{n^{-\delta/2}\sqrt{F_\beta}}{\sqrt{2}}\right) = e^{-\Omega(n^\delta)} \, .
\end{align}

In conclusion, $\epsilon_\beta$ can be bounded as
\begin{align}
\nonumber \epsilon_\beta&\le O(n^{-\delta})+O(n^{-(1+\delta)/2})+e^{-\Omega(n^\delta)}\\
&=O(n^{-\delta}) \, .
\end{align}

The remaining contribution to the error can be bounded as in Eq.  (\ref{case2error}), leading to an overall error of size $O(n^{-\delta})$. 

\subsection{Case 4 (independent $\varphi$, fixed $|\alpha|$ and $\beta$) and Case 6 (independent $|\alpha|$, fixed $\varphi$ and $\beta$).}
In Case 4 (independent $\varphi$, fixed $|\alpha|$ and $\beta$) and Case 6 (independent $|\alpha|$, fixed $\varphi$ and $\beta$), the displacement $\alpha$ is partially known. Such a knowledge allows us to reduce the amount of memory.

The protocols for these two cases 4 and 6 are very similar. Let us start from Case 4, where the phase of the displacement is independent while the modulus is fixed. The protocol for Case 4 (independent $\varphi$, fixed $|\alpha|$ and $\beta$) runs as follows:
\begin{itemize}
\item \emph{Preprocessing.} Divide the range of $\varphi$ into $n^{1/2}$ intervals, each labeled by a point $\hat{\varphi}_i$ in it, so that $|\varphi'-\varphi''|=O(n^{-1/2})$  for any $\varphi',\varphi''$ in the same interval.
\item \emph{Encoder.}
\begin{enumerate}
\item Perform the unitary channel $\map{U}_{\rm BS}(\cdot)$ on the input state to transform it into $\rho_{\sqrt{n}\alpha,\beta}\otimes (\rho^{{\rm thm}}_{\beta})^{\otimes (n-1)}$.
\item Send the first and the last mode through a group of beam splitters (\ref{bizuBS}) that implements the transformation $\rho_{\alpha,\beta}\otimes\rho_{\beta}^{\rm thm}\to\rho_{\sqrt{n-n^{1-\delta/2}}\alpha,\beta}\otimes\rho_{\sqrt{n^{1-\delta/2}}\alpha,\beta}$.
\item Estimate $\varphi$ by the heterodyne measurement $\{\frac{\d^2\alpha'}{\pi}|\alpha'\>\<\alpha'|\}$ on the last mode, which yields an estimate $\hat{\varphi}$ which is the phase of $\alpha'$.
 Encode the label $\hat{\varphi}^\ast$ of the interval  containing $\hat{\varphi}$ in a classical memory.
 \item Displace the first mode with $\map{D}_{-\sqrt{n-n^{1-\delta/2}}\hat{\alpha}^\ast}$ with $\hat{\alpha}^\ast:=|\alpha|e^{i\hat{\varphi}^\ast}$.
 \item Send the state of the first mode through a truncation channel  $\map{P}_{n^{\delta}}$ defined in (\ref{trun-P-alpha}) and encode the output state in a quantum memory.
\end{enumerate}

\item \emph{Decoder.}
\begin{enumerate}
\item Read $\hat{\alpha}^\ast$ and perform the displacement $\map{D}_{\sqrt{n-n^{1-\delta/2}}\hat{\alpha}^\ast}$ on the state of the quantum memory.
\item Apply the quantum amplifier $\map{A}^{\gamma_n}$ with $\gamma_n=1/(1-n^{-\delta/2})$.
\item Prepare the other $(n-1)$ modes in the thermal state $\rho^{{\rm thm}}_\beta$.
\item Perform $\map{U}^{-1}_{\rm BS}$ on the thermal state $\rho^{{\rm thm}}_\beta$ and the quantum memory.
\end{enumerate}
\end{itemize}
The protocol for Case 6 works in the same way except that $|\alpha|$ is estimated instead of $\varphi$.
%(note that only the leading order matters since $\delta$ can be made arbitrarily small)
For both cases the memory cost consists of two parts: $(1/2)\log n$ bits for encoding the (rounded) value $\hat{\alpha}^\ast$ of the estimate and $\delta\log n$ qubits for encoding the first mode (displaced thermal state).
The error can be bounded as previous as
$\epsilon=O\left(n^{-\delta/2}\right)$.

\subsection{Case 5 (fixed $|\alpha|$, independent $\varphi$ and $\beta$) and Case 7 (fixed $\varphi$, independent $|\alpha|$ and $\beta$).}
Case 5 (fixed $|\alpha|$, independent $\varphi$ and $\beta$) and Case 7 (fixed $\varphi$, independent $|\alpha|$ and $\beta$) can be treated in the same way as Case 4 (independent $\varphi$, fixed $|\alpha|$ and $\beta$) and Case 6 (independent $|\alpha|$, fixed $\varphi$ and $\beta$),  except that the thermal parameter $\beta$ is now independent. We illustrate only the protocol for Case 5 (fixed $|\alpha|$, independent $\varphi$ and $\beta$) and the other naturally follows. The protocol runs as follows:
\begin{itemize}
\item \emph{Preprocessing.} Divide the range of $\varphi$ into $n^{1/2}$ intervals, each labeled by a point $\hat{\varphi}_i$ in it, so that $|\varphi'-\varphi''|=O(n^{-1/2})$  for any $\varphi',\varphi''$ in the same interval.
\item \emph{Encoder.}
\begin{enumerate}
\item Perform the unitary channel $\map{U}_{\rm BS}(\cdot)$ on the input state to transform it into $\rho_{\sqrt{n}\alpha,\beta}\otimes (\rho^{{\rm thm}}_{\beta})^{\otimes (n-1)}$.
\item Estimate $\beta$ with the von Neumann measurement of the photon number on the $n-1$ copies of $\rho^{{\rm thm}}_{\beta}$. Denote by $\hat{\beta}$ the maximum likelihood estimate of $\beta$.
\item Send the first and the last mode through a group of beam splitters (\ref{bizuBS}) that implements the transformation $\rho_{\alpha,\beta}\otimes\rho_{\beta}^{\rm thm}\to\rho_{\sqrt{n-n^{1-\delta/2}}\alpha,\beta}\otimes\rho_{\sqrt{n^{1-\delta/2}}\alpha,\beta}$.
\item Estimate $\varphi$ by the heterodyne measurement $\{\frac{\d^2\alpha'}{\pi}|\alpha'\>\<\alpha'|\}$ on the last mode, which yields an estimate $\hat{\varphi}$ which is the phase of $\alpha'$.
 Encode the label $\hat{\varphi}^\ast$ of the interval  containing $\hat{\varphi}$ in a classical memory.
 \item Displace the first mode with $\map{D}_{-\sqrt{n-n^{1-\delta/2}}\hat{\alpha}^\ast}$ with $\hat{\alpha}^\ast:=|\alpha|e^{i\hat{\varphi}^\ast}$.
  \item Prepare the $n$-th mode in the thermal state $\rho^{{\rm thm}}_{\hat{\beta}}$. The $n$-mode state is now $\rho_{\sqrt{n-n^{1-\delta/2}}(\alpha-\hat{\alpha}^\ast),\beta}\otimes (\rho^{{\rm thm}}_{\beta})^{\otimes (n-2)}\otimes\rho^{{\rm thm}}_{\hat{\beta}}$.
\item Send the state of the first mode through a truncation channel  $\map{P}_{n^{\delta}}$ defined in (\ref{trun-P-alpha}) and encode the output state in a quantum memory.
\item Use the thermal state encoder $\map{E}^{{\rm thm}}_{n-1,\delta}$ (see Lemma \ref{lemma-thermal}) to compress the remaining $n-1$ modes and encode the output state in a classical memory.
\end{enumerate}

\item \emph{Decoder.}
\begin{enumerate}
\item Read $\hat{\alpha}^\ast$ and perform the displacement $\map{D}_{\sqrt{n-n^{1-\delta/2}}\hat{\alpha}^\ast}$ on the state of the quantum memory.
\item Apply the quantum amplifier $\map{A}^{\gamma_n}$ with $\gamma_n=1/(1-n^{-\delta/2})$.
\item Use the thermal state decoder $\map{D}^{{\rm thm}}_{n-1,\delta}$ to recover the other $(n-1)$ modes in the thermal state $\rho^{{\rm thm}}_\beta$ from the memory.
\item Perform $\map{U}^{-1}_{\rm BS}$ on the output of $\map{D}^{{\rm thm}}_{n-1,\delta}$ and the quantum memory.
\end{enumerate}
\end{itemize}
%(note that only the leading order matters since $\delta$ can be made arbitrarily small)

The protocol for Case 7 works in the same way except that $|\alpha|$ is estimated instead of $\varphi$.
For both cases the memory cost consists of three parts: $(1/2)\log n$ bits for encoding the (rounded) value $\hat{\alpha}^\ast$ of the estimate, $\delta\log n$ qubits for encoding the first mode (displaced thermal state), and $(1/2+\delta)\log n$ bits for encoding the other modes (thermal states).
Overall, the protocol requires $\delta\log n$ qubits and $(1+2\delta)\log n$ classical bits.
The error can be bounded as previous as $\epsilon=O\left(n^{-\delta/2}\right)$.

\section{Compression of identically prepared finite dimensional systems.}\label{sec-main}
In this section, we study the compression of finite-dimensional non-degenerate quantum systems, using quantum local asymptotic normality and leveraging  on our results on displaced thermal states.   %through quantum local asymptotic normality.
We show that, just as for displaced thermal states, each independent parameter of a qudit family requires $(1/2+\delta)\log n$ memory for any $\delta>0$.
 The compression protocol is introduced in the following.

\subsection{The compression protocol}\label{subsec-protocol}
To construct a compression protocol, we will use the following  techniques:
\begin{itemize}
\item{ \em Quantum local asymptotic normality (Q-LAN).} The quantum version of local asymptotic normality has been derived in several different forms \cite{LAN,LAN2,kahn-thesis,LAN3}. Here we use the version of \cite{kahn-thesis}, which states that $n$ identical copies of a qudit state can be locally approximated by a classical-quantum Gaussian state in the large $n$ limit.

Explicitly, for a fixed point $\theta_0=\left(\xi_0,\mu_0\right)$, one defines the neighborhood
\begin{align}\label{neighborhood}\set{\Theta}_{n,x}(\theta_0)=\left\{\theta=\theta_0+\delta\theta/\sqrt{n},~|~ \|\delta\theta\|_{\infty}\le n^{\frac{x}2}\right\} \,,
\end{align}
where $\|\delta\theta\|_{\infty}$ is the max vector norm  $\|\delta\theta\|_{\infty}:=\max_i (\delta\theta)_i$ and $x\in(0,1)$.
Q-LAN states that every  $n$-fold product state $\rho_{\theta}^{\otimes n}$ with $\theta$ in  the neighborhood  $\set{\Theta}_{n,x}(\theta_0)$ can be approximated by a classical-quantum Gaussian state:
\begin{align}
%\rho_{\theta}^{\otimes n}&\leftrightarrow G_{n,\theta}\qquad\qquad \theta\in\set{\Theta}_{n,x}(\theta_0)\\
\nonumber G_{n,\theta}&=N\left(\delta\mu,V_{\mu_0}\right)\otimes\Phi\left(\delta\xi,\mu_0\right) \\
\Phi\left(\delta\xi,\mu_0\right)&=\bigotimes_{1\le j<k\le d}\rho_{\alpha_{j,k},\beta_{j,k}},\label{LAN-gaussian}
\end{align}
where $(\delta\xi,\delta\mu) = \sqrt{n} (\theta-\theta_0)$, $N\left(\delta\mu,V_{\mu_0}\right)$ is the multivariate normal distribution with mean $\delta\mu$ and covariance matrix $V_{\mu_0}$ (equal to the inverse of the Fisher information of the $(d-1)$-dimensional family of probability distributions $\{\mu\}$, evaluated at $\mu=\mu_0$) and $\rho_{\alpha_{j,k},\beta_{j,k}}$ is the displaced thermal state defined as
\begin{align}\label{LAN-thermal}
\rho_{\alpha_{j,k},\beta_{j,k}}&=D_{\alpha_{j,k}}\rho^{{\rm thm}}_{\beta_{j,k}}D^\dag_{\alpha_{j,k}}\\
\alpha_{j,k}&=\delta\xi^{I}_{j,k}+i\delta\xi^{R}_{j,k}\qquad \beta_{j,k}=\frac{(\mu_{0})_k}{(\mu_0)_j}.
\end{align}
 where $\{(\mu_{0})_j\}_{j=1}^{d-1}$ are components of $\mu_0$ and $(\mu_0)_d:=1-\sum_{k=1}^{d-1}(\mu_0)_k$.
%Recalling that the parameters $(\mu, \xi)$ are contained in the neighbourhood   $\set{\Theta}_{n,x} (\theta_0)$, we obtain that the displacement amplitude is upper bounded as
%\begin{align}\label{max-alpha}
%|\alpha_{j,k}|\le|\alpha_{j,k}|_{\max}:=\frac{n^{\frac{1-x}2}}{2\sqrt{\mu_{0,j}-\mu_{0,k}}}.
%\end{align}
%Here $D_{\alpha}=\exp(\alpha a^\dag-\bar{\alpha} a)$ is the displacement operator and $\rho_{{\rm thm},\beta}=(1-\beta)\sum_i \beta^i|i\>\<i|$ is a thermal state.
In the neighborhood $\set{\Theta}_{n,x}(\theta_0)$ of $\theta_0$, it is possible to construct two quantum channels $\map{T}_{\theta_0}^{(n)}$ and $\map{S}_{\theta_0}^{(n)}$, which depend on $\theta_0$ and $n$ (but not on the exact value of $\theta$). Using these two channels, $n$-copy qudit states and Gaussian states can be interconverted with an error vanishing in $n$ \cite{kahn-thesis}.  Explicitly, one has the following bounds
\begin{align}
\sup_{\theta\in\set{\Theta}_{n,x}(\theta_0,c)}\left\|\map{T}_{\theta_0}^{(n)}\left(\rho_{\theta}^{\otimes n}\right)-G_{n,\theta}\right\|_1=O\left(n^{-\kappa(x)}\right)\label{LAN-T}\\
\sup_{\theta\in\set{\Theta}_{n,x}(\theta_0,c)}\left\|\rho_{\theta}^{\otimes n}-\map{S}^{(n)}_{\theta_0}\left(G_{n,\theta}\right)\right\|_1=O\left(n^{-\kappa(x)}\right)\label{LAN-S},
\end{align}
where $\|\cdot\|_1$ denotes the trace norm and  $\kappa (x)$ is defined as
\begin{align}\label{kappa}
\kappa(x)=\min\left\{\frac{1-z-\eta}2,\frac{1-3x}{4}-y,\frac{2-9\eta}{24}\right\}
\end{align}
where $y,z,\eta$ can be freely chosen under the constraints $(1+x)/2<z<1$, $y>0$, $\eta>0$ and $\eta>x-y$.  With proper values for $y, z$, and $\eta$, when $x\in[0,2/9)$, the exponent $\kappa(x)$ is a non-increasing function of $x$ and falls within the interval $[0.027,0.084]$.

\item {\em Quantum state tomography.} State tomography is an important technique used to determine the density matrix of an unknown quantum state. In our protocol, the role of tomography  is to provide a rough estimate of $\theta_0$ so that we can apply Q-LAN. We adopt the  tomographic protocol proposed in  \cite{tomography}, which provides  an estimate $\rho_{\hat{\theta}}$ of a qudit state $\rho_\theta$.      When the protocol is carried out on $n$ copies of the state $\rho_\theta$, the estimate satisfies the bound
\begin{align}\label{prob-0}
\mathbf{Prob}\left[\frac12\|\rho_{\theta}-\rho_{\hat{\theta}}\|_1\le \varepsilon \right]\ge 1- (n+1)^{3d^2}e^{-n\varepsilon^2}
\end{align}
using $n$ copies of the state.

\end{itemize}

\begin{figure}  [b!]
\begin{center}
  \includegraphics[width=1\linewidth]{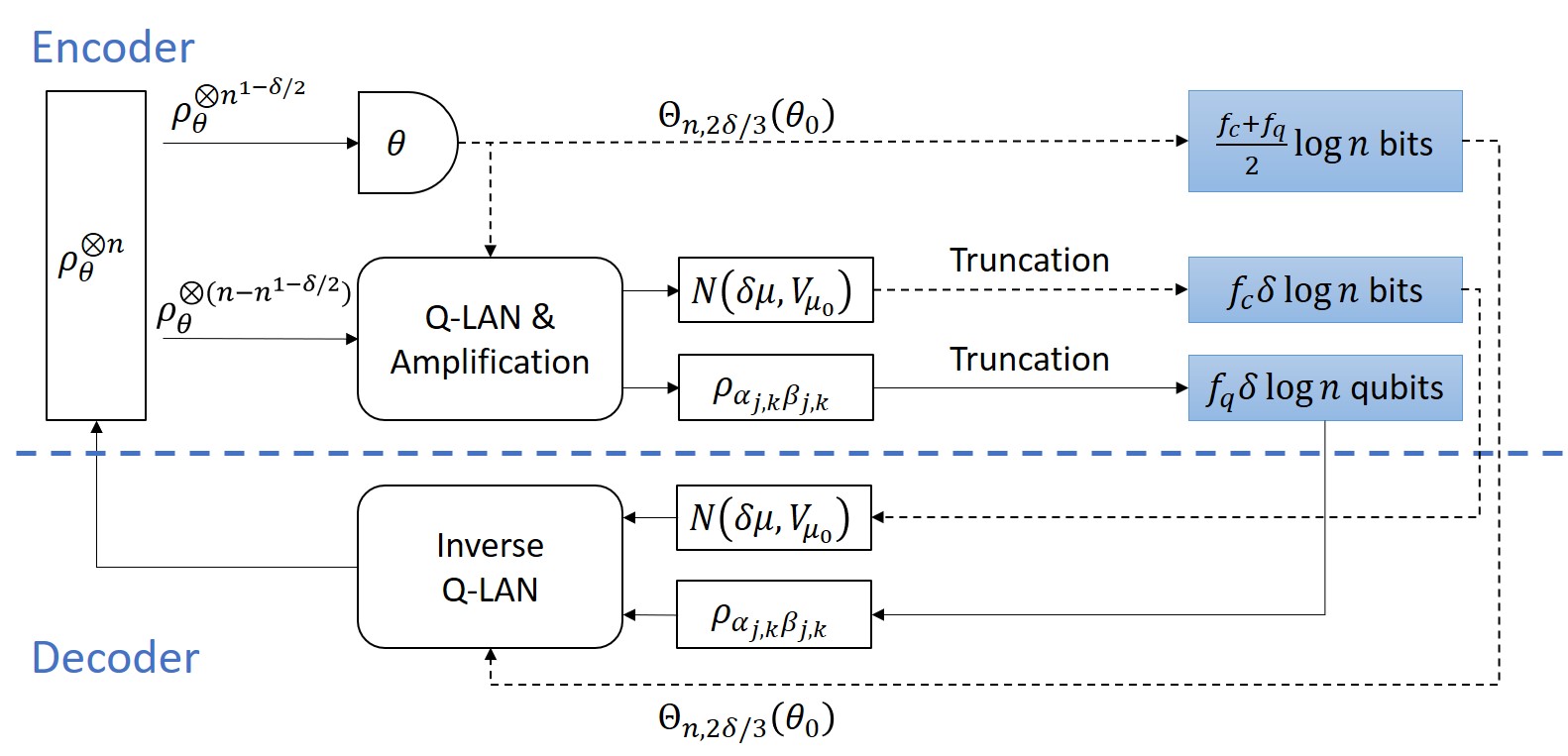}
  \end{center}
\caption{\label{fig-qudit}
  {\bf Compression protocol for qudit states.}  }
\end{figure}

Our compression protocol is illustrated in Fig. \ref{fig-qudit}.          For any $\delta\in(0,2/9)$, the protocol consists of the following steps:
\begin{itemize}
\item {\it Preprocessing.}
Divide  the parameter space $\set{\Theta}$ into a lattice $\set{L}:=\{\theta\in\set{\Theta}~|~\theta_i=z_i/(2\sqrt{n}), z_i\in\Z\ \forall\ i\}$. The lattice has approximately $  n^{(f_c+f_q)/2}$ points, which will be used to store the outcome of tomography.
\item {\it Encoder.} The encoder of $\rho_\theta^{\otimes n}$ consists of five steps:
\begin{enumerate}
\item {\em Tomography.} Use  $n^{1-\delta/2}$ copies of $\rho_\theta$  for quantum tomography, which yields an estimate $\hat{\theta}$ of $\theta$.
\item {\em Storage of the estimate.}    Encode the estimate $\hat{\theta}$ as a point in the lattice
\begin{align}\label{lattice}
\set{L}:=\{\theta\in\set{\Theta}~|~\theta_i=z_i/(2\sqrt{n}), z_i\in\Z\ \forall\ i\}.
\end{align}
Choose the lattice point $\theta_0$  that is closest to  $\hat{\theta}$, namely
\begin{align}\label{def-theta0}
\theta_0:=\argmin_{\theta'\in\set{L}}\|\hat{\theta}-\theta'\|_{\infty}.
\end{align}

\item {\em Q-LAN.} After the tomography step, we end up with $n-n^{1-\delta/2}$ copies of the state.  Define
\begin{align}\label{gamma-n1}
\gamma'_n=\frac{1}{1-n^{-\delta/2}}.
\end{align}
so that the number of remaining copies is $n/\gamma'_n$.
   The $n/\gamma'_n$ copies are sent through the channel $\map{T}_{\theta_0}^{(n/\gamma'_n)}$ (\ref{LAN-T}) which outputs the  Gaussian state $G_{(n/\gamma'_n),\theta}$ defined by Eq. (\ref{LAN-gaussian}).
   \item {\em Amplification.} To compensate the loss of copies in tomography, the state $\rho_{\alpha_{j,k},\beta_{j,k}}$ of each quantum mode is amplified by the amplifier defined in Eq. (\ref{amplifier}) with $\gamma=\gamma'_n$.
 The Gaussian distribution on the classical register is rescaled by a constant factor:
\begin{align}\label{amp-classical}
\map{A}_{\rm c}(\rho):=\sum_u \<u|\rho|u\>\,|\sqrt{\gamma'_n}u\>\<\sqrt{\gamma'_n}u|
\end{align}
 where $\{|u\>\<u|\}$ denotes the basis of the classical register with $u\in\R^{f_c}$. Notice that, in practice, $u$ can be approximated by a long sequence of classical bits with arbitrary high precision.

The whole amplification process is described by the channel $\map{A}^{(n/\gamma'_n)\to n}_{\theta_0}$ defined as the following:
\begin{align}\label{amp-total}
\map{A}^{(n/\gamma'_n)\to n}_{\theta_0}=\map{A}_{\rm c}\otimes \map{A}_{\rm q}\qquad \map{A}_{\rm q}=\bigotimes_{j<k}\map{A}^{\gamma'_n}.
\end{align}
where $\map{A}_{\rm c}$ is the classical amplifier just defined in Eq. (\ref{amp-classical}).

\item {\em Gaussian state compression.} Each quantum mode of the amplified Gaussian state is then truncated by $\map{P}_{n^\delta}$, defined by Eq. (\ref{trun-P-alpha}). The output state is then stored in a quantum memory of size $\delta \log n$ for each mode.
The classical mode is compressed by a map $\map{P}_{\rm c}$ that truncates the state into a $O(n^{\delta/2})$-hypercube centered around the mean of the Gaussian and rounds the continuous variable $u$ into a discrete lattice. Explicitly, we have
\begin{align}
\nonumber \map{P}_{\rm c}(\rho)&=\sum_{\|u\|_{\infty}\le n^{\delta/2}}\<u|\rho|u\>\,|r_{n,\delta}(u)\>\<r_{n,\delta}(u)|  \\
  &\quad +\left[1-\sum_{\|u'\|_{\infty}\le n^{\delta/2}}\<u'|\rho|u'\>\right]|0\>\<0|,
\end{align}
where $r_{n,\delta}(u)$ is the rounding function which maps $u\in \R^{f_c}$ to the closest point on the lattice $\left(\Z/n^{\delta/2}\right)^{f_c}$. The output of $\map{P}_c$ is stored in classical memory. The memory size is determined by the number of lattice points covered by the range of truncation. The separation between lattice points is $n^{-\delta/2}$, and the range of truncation is $[-n^{\delta/2}, n^{\delta/2}]^{f_c}$, so it covers $O(n^{f_c\delta})$ points on the lattice. We therefore need  $f_c \delta \log n$ bits.

The whole process for Gaussian state compression is described by the channel
\begin{align}
\map{P}^{(n)}_{\theta_0}&=\map{P}_{\rm c}\otimes \map{P}_{\rm q}\qquad \map{P}_{\rm q}=\bigotimes_{j<k}\map{P}_{n^\delta} \, .
\end{align}

\end{enumerate}
\item {\it Decoder.} Read $|t\>\<t|\in(\Z/n^{\delta/2})^{f_c}$ from the classical memory that stores the discretized Gaussian distribution and perform a uniform sampling in a shrinking hypercube containing $t$:
\begin{align}
\set{C}_{n,\delta}(t):=\left\{\hat{u}\in\R^{f_c}~|~\|\hat{u}-t\|_{\infty}\le (1/2)n^{-\delta/2}\right\}.
\end{align}
This step converts the discrete random variable $t$, whose distribution is a weighted sum of Dirac deltas, into a continuous random variable $\hat{u}$ with probability distribution that approximates $N\left(\delta\mu,V_{\mu_0}\right)$.
The output sample $\hat{u}$ together with the state of the quantum memory is sent through the channel $\map{S}_{\theta_0}^{(n)}$ (\ref{LAN-S}), which can be constructed from  the outcome of tomography.
\end{itemize}
\subsection{Error analysis.}
To bound the error of our protocol, we need to specify a small neighborhood for discussion, which should contain the true value $\theta$ with high probability. A proper choice is the neighborhood $\set{\Theta}_{n,2\delta/3}(\theta_0)$ (\ref{neighborhood}).
Using the triangle inequality of trace distance, we split the overall error into  four terms
\begin{align}
\epsilon\le & \epsilon_{\rm tomo}+\epsilon_{\rm amp}+\epsilon_{\rm G}+\epsilon_{\rm Q-LAN},
\end{align}
where
\begin{align}
\epsilon_{\rm tomo}&=\mathbf{Prob}\left[\theta\not\in\set{\Theta}_{n,2\delta/3}(\theta_0)\right]\\
\nonumber \epsilon_{\rm amp}&=\frac12\sup_{\theta_0}\sup_{\theta\in\set{\Theta}_{n,2\delta/3}(\theta_0)}\Big\|\map{A}^{(n/\gamma'_n)\to n}_{\theta_0}\left(G_{n/\gamma'_n,\theta}\right) \\ 
& \qquad  \qquad  \qquad \qquad \qquad-G_{n,\theta}\Big\|_1\label{amp-error}\\
\epsilon_{\rm G}&=\frac12\sup_{\theta_0}\sup_{\theta\in\set{\Theta}_{n,2\delta/3}(\theta_0)}\left\|\map{P}^{(n)}_{\theta_0}\left(G_{n,\theta}\right)-G_{n,\theta}\right\|_1\label{compress-error}\\
\nonumber \epsilon_{\rm Q-LAN}&=\frac12\sup_{\theta_0}\sup_{\theta\in\set{\Theta}_{n,2\delta/3}(\theta_0)}\Big\{\Big\|\map{T}^{(n/\gamma'_n)}_{\theta_0}\left(\rho_{\theta}^{\otimes (n/\gamma'_n)}\right)  \\
 \nonumber &\qquad  \qquad   \quad \qquad \qquad \qquad  \qquad-G_{n/\gamma'_n,\theta}\Big\|_1\\
  & \qquad   \qquad \qquad \qquad+\left\|\rho_{\theta}^{\otimes n}-\map{S}^{(n)}_{\theta_0}\left(G_{n,\theta}\right)\right\|_1\Big\}
\end{align}
are the error terms of tomography, amplification, truncation, and Q-LAN, respectively.  In the following, we will provide upper bounds for all  four terms.

Let us start from the tomography error.
By definition of the neighborhood $\set{\Theta}_{n,2\delta/3}(\theta_0)$ (\ref{neighborhood}), we have
\begin{align}
\epsilon_{\rm tomo}&=\mathbf{Prob}\left[\|\theta-\theta_0\|_{\infty}> n^{-1/2+\delta/3}\right]\nonumber\\
&\le\mathbf{Prob}\left[\|\theta-\hat{\theta}\|_{\infty}+\|\hat{\theta}-\theta_0\|_{\infty}>  n^{-1/2+\delta/3}\right]\nonumber\\
&\le\mathbf{Prob}\left[\|\theta-\hat{\theta}\|_{\infty}>n^{-1/2+\delta/3}(1-O(n^{-\delta/3}))\right].\label{prob-inter}
\end{align}
The first inequality comes from triangle inequality and the second inequality holds since $\|\hat{\theta}-\theta_0\|_{\infty}\le 1/(2\sqrt{n})$ which  is an immediate implication of Eq. (\ref{lattice}) and Eq. (\ref{def-theta0}).

To further bound the error, notice that the trace distance has a Euclidean expansion, of the form $\|\rho_\theta-\rho_{\theta'}\|_1 = C \|\theta-\theta'\|_{\infty}+O(\|\theta-\theta'\|_{\infty}^2)$, where $C>0$ is a suitable constant.  Then we have
\begin{align}
\epsilon_{\rm tomo}&\le\mathbf{Prob}\left[\frac12\|\rho_\theta-\rho_{\hat{\theta}}\|_1> (C/4) n^{-1/2+\delta/3}\right].
\end{align}
Substituting $\varepsilon$ with $(C/4) n^{-1/2+\delta/3}$ and $n$ with $n^{1-\delta/2}$ in Eq. (\ref{prob-0}) we have
\begin{align}
\epsilon_{\rm tomo}&=n^{-\Omega(n^{\delta/6})}\label{error-tomography}.
\end{align}

Next, we look at the error of amplification.
By Eqs. (\ref{LAN-gaussian}) and (\ref{amp-total}), the output of the amplifier can be expressed as
\begin{align}
\nonumber \map{A}^{(n/\gamma'_n)\to n}_{\theta_0}\left(G_{n/\gamma'_n,\theta}\right)= &\map{A}_{\rm c}\left( N\left(\delta\mu/\sqrt{\gamma'_n},V_{\mu_0}\right)\right)  \\
& \otimes\map{A}_{\rm q}\left(\Phi\left(\delta\xi/\sqrt{\gamma'_n},\mu_0\right)\right) \label{Gnovergamma}
\end{align}
where $(\delta\mu, \delta\xi) = \sqrt{n} (\theta-\theta_0)$.
We then split the amplification error (\ref{amp-error}) into two terms: the term of the classical mode and the term of quantum mode amplification.
\iffalse
We first analyze the classical mode, where the error comes from the mismatch of two Gaussian distributions $N\left(\delta\mu,V_{\mu_0}\right)$ and $N\left(\delta\mu/\sqrt{\gamma'_n},V_{\mu_0}\right)$.
This error can be bounded exploiting the Kullback-Leibler divergence between two Gaussian distributions (see, for instance, Eq. (6) in \cite{nielsen2009clustering}) and the Pinsker's inequality as:
\begin{align}\label{sb1}
\frac12\|N\left(\delta\mu,V_{\mu_0}\right)-N\left(\delta\mu/\sqrt{\gamma'_n},V_{\mu_0}\right)\|_1\le\frac{(\gamma'_n)^{-1/2}-1}2\sqrt{\delta\mu^T V_{\mu_0}^{-1} \delta\mu}.
\end{align}
From Eq. (\ref{neighborhood}) and Eq. (\ref{gamma-n1}) we get $\|\delta\mu\|_{\infty}=O(n^{\delta/3})$ and $(\gamma'_n)^{-1/2}-1=O(n^{-\delta/2})$, respectively. Substituting into Eq. (\ref{sb1}), we get
\begin{align}\label{sb2}
\frac12\|N\left(\delta\mu,V_{\mu_0}\right)-N\left(\delta\mu/\sqrt{\gamma'_n},V_{\mu_0}\right)\|_1=O\left(n^{-\delta/ 6}\right).
\end{align}
\fi
We first analyze the classical mode, where the amplifier rescales the classical Gaussian distribution.
The amplifier $\map{A}_{\rm c}$ (\ref{amp-classical}) amplifies the distribution by a factor of $\sqrt{\gamma'_n}$, shifts the center of the Gaussain distribution from $\delta\mu/\sqrt{\gamma'_n}$ to $\delta\mu$, and rescales the covariance matrix by a factor of $\gamma'_n$, from $V_{\mu_0}$ to $\gamma'_n V_{\mu_0}$:
\begin{align}
\map{A}_{\rm c}\left(N\left(\delta\mu/\sqrt{\gamma'_n},V_{\mu_0}\right)\right) = N\left(\delta\mu,\gamma'_nV_{\mu_0}\right).
\end{align}
%{\color{red} The previous sentence is very hard to parse. I am completely lost here. }As a result, we consider the difference of the following distributions: $N_{\delta\mu,I_{\mu_0}}$ and $N_{\delta\mu,\gamma'_nI_{\mu_0}}$. As they have the same center, we may translate them both to the origin. The error  {\color{red}  What error is this?  } is:

The amplification error for the classical mode is:
\begin{align}
\nonumber \epsilon_{\rm classical}
& =  \|N\left(\delta\mu,\gamma'_nV_{\mu_0}\right) - N\left(\delta\mu,V_{\mu_0}\right)\|_1 \\
& =  \|N\left(0,\gamma'_nV_{\mu_0}\right) - N\left(0,V_{\mu_0}\right)\|_1.
\end{align}
Writing explicitly the probability density functions of the Gaussian distributions, we have
\begin{align}
\nonumber \epsilon_{\rm classical}& =  \int_{\mathbb{R}^d} \Big|
\frac{1}{\sqrt{2\pi |V_{\mu_0}|}}\exp\left(-\frac{\mathbf{x}^T V_{\mu_0}^{-1} \mathbf{x}}2 \right) \\
\nonumber &  \quad  -
\frac{1}{\sqrt{2\pi \gamma'_n |V_{\mu_0}|}}\exp\left(-\frac{\mathbf{x}^T V_{\mu_0}^{-1} \mathbf{x}}{2\gamma'_n} \right)
\Big| d\mathbf{x} \\
\nonumber &\le \frac{1}{\sqrt{2\pi |V_{\mu_0}|}} \int_{\mathbb{R}^d} \left|\exp\left(-\frac{\mathbf{x}^T V_{\mu_0}^{-1} \mathbf{x}}2 \right)\right. \\
\nonumber &  \quad   \qquad \qquad\qquad -\left. \exp\left(-\frac{\mathbf{x}^T V_{\mu_0}^{-1} \mathbf{x}}{2\gamma'_n} \right) \right| d\mathbf{x} \\
\nonumber &+ \frac{\left(1-(\gamma'_n)^{-1/2}\right)}{\sqrt{2\pi|V_{\mu_0}|}} \int_{\mathbb{R}^d} \exp\left(-\frac{\mathbf{x}^T V_{\mu_0}^{-1} \mathbf{x}}{2\gamma'_n}\right)d\mathbf{x}\\
\nonumber &\le \frac{(1-(\gamma'_n)^{-1})}{2\sqrt{2\pi |V_{\mu_0}|}} \int_{\mathbb{R}^d}   \mathbf{x}^T V_{\mu_0}^{-1}\mathbf{x}\exp\left(-\frac12\mathbf{x}^T V_{\mu_0}^{-1}\mathbf{x}\right) d\mathbf{x}  \\
\nonumber &  \quad  + O\left(n^{-\delta/2}\right)\\
%\leq & O(n^{-\delta/2}) \int_{\mathbb{R}^d} \mathbf{x}^T V_{\mu_0}^{-1}\mathbf{x} e^{-\frac12\mathbf{x}^T V_{\mu_0}^{-1}\mathbf{x}} d\mathbf{x}\\
&=  O\left(n^{-\delta/2}\right).
\end{align}
Note that $(1-(\gamma'_n)^{-1})$ and $(1-(\gamma'_n)^{-1/2})$ both have order $O(n^{-\delta/2})$.

Now we check the quantum term. On the quantum register, the amplifier acts independently on each mode as the displaced thermal state amplifier defined by Eq. (\ref{amplifier}).
From a similar calculation as Eq. (\ref{amp-output}), we obtain the inequality
\begin{align}
\epsilon_{\rm quantum}\le\frac12\sum_{j<k}\left\|\map{A}^{\gamma'_n}\left(\rho_{\alpha_{j,k},\beta_{j,k}}\right)-\rho_{\alpha_{j,k},\beta_{j,k}}\right\|_1,
\end{align}
where the error of each quantum amplifier is [cf. Eqs. (\ref{mancante}) and (\ref{gamma-n1})]
\begin{align}
\frac12\left\|\map{A}^{\gamma'_n}\left(\rho_{\alpha_{j,k},\beta_{j,k}}\right)-\rho_{\alpha_{j,k},\beta_{j,k}}\right\|_1=O\left(n^{-\delta/2}\right).\label{amplification-error}
\end{align}
Therefore, we conclude that the amplification error (\ref{amp-error}) scales at most as
\begin{align}\label{error-amplification}
\epsilon_{\rm amp}&\le \epsilon_{\rm classical}+\epsilon_{\rm quantum} = O\left(n^{-\delta/2}\right).
\end{align}

Let us now consider the error (\ref{compress-error}) of the Gaussian state compression, which can be upper bounded as
\begin{align}
\nonumber \epsilon_{\rm G}\le &\frac12\left\|\map{P}_{\rm c}\left[N(\delta\mu,V_{\mu_0})\right]-N\left(\delta\mu,V_{\mu_0}\right)\right\|_1\\
\label{error-truncation0}
 &  +\frac12\sum_{j<k}\left\|\map{P}_{n^\delta}\left(\rho_{\alpha_{j,k},\beta_{j,k}}\right)-\rho_{\alpha_{j,k},\beta_{j,k}}\right\|_1
\end{align}
For the classical part, there are two sources of error: the error of rounding  and the error of truncation. The former is simply $O(n^{-\delta/2})$, equal to the resolution of the rounding. The latter can be bounded by
noticing that $\|\delta\mu\|_{\infty}\le n^{\delta/3}$, from which we have
\begin{align}
\nonumber &\left\|\map{P}_{\rm c}\left[N(\delta\mu,V_{\mu_0})\right]-N\left(\delta\mu,V_{\mu_0}\right)\right\|_1\\
&  \le\int_{\|u\|_{\infty}> n^{\delta/2}}N\left(\delta\mu,V_{\mu_0}\right)(\d u)\nonumber\\
&\le\int_{\|u-\delta\mu\|_{\infty}> n^{\delta/2}-n^{\delta/3}}N\left(\delta\mu,V_{\mu_0}\right)(\d u)\nonumber\\
&=e^{-\Omega(n^{\delta})}\label{error-truncation-classical}
\end{align}
where $N\left(\delta\mu,V_{\mu_0}\right)(\d u)$ denotes the probability density function. For each of the quantum modes,
employing Lemma \ref{lemma-truncation}, with $K$ substituted by $n^\delta$ and $|\alpha_{j,k}|=O(n^{\delta/3})$, we have
\begin{align}\label{error-truncation-quantum}
\frac12\left\|\map{P}_{j,k}\left(\rho_{\alpha_{j,k},\beta_{j,k}}\right)-\rho_{\alpha_{j,k},\beta_{j,k}}\right\|_1=\beta_{j,k}^{\Omega(n^{\delta/8})}+e^{-\Omega(n^{\delta/4})}.
\end{align}

Substituting Eqs. (\ref{error-truncation-classical}) and (\ref{error-truncation-quantum}) into Eq. (\ref{error-truncation0}), we have
\begin{align}\label{error-truncation}
\epsilon_{\rm G}= \left(\max_{j<k}\beta_{j,k}\right)^{\Omega(n^{\delta/8})}+e^{-\Omega(n^{\delta/4})}.
\end{align}

Finally, we note that the error of the Q-LAN approximation, corresponding to the errors generated by the transformations between the input state and its Gaussian state approximation, is given by Eqs. (\ref{LAN-T}) and (\ref{LAN-S}) as $\epsilon_{\rm Q-LAN}=O\left(n^{-\kappa(2\delta/3)}\right)$. Since $\kappa$ is non-increasing, the bound can be relaxed to
\begin{align}
\epsilon_{\rm Q-LAN}&=O\left(n^{-\kappa(\delta)}\right),\label{error-lan}
\end{align}
 Summarizing the above bounds (\ref{error-tomography}), (\ref{error-amplification}), (\ref{error-truncation}), (\ref{error-lan}) on each of the error terms, we conclude that the protocol generates an error which scales at most
\begin{align}
\epsilon=O\left(n^{-\kappa(\delta)}\right)+O\left( n^{-\delta/2}\right).
\end{align}

\subsection{Total memory cost.}
The total memory cost consists of three  contributions: a classical memory of $[(f_c+f_q)/2]\log n$ bits for the tomography outcome, a classical memory of $f_c\delta\log n$ bits for the classical part of the Gaussian state and a quantum memory of $f_q\delta\log n $ qubits for the quantum part of the Gaussian state. In short, it takes $(1/2+\delta)\log n$ bits to encode a classical independent parameter and $(1/2)\log n$ bits plus $\delta\log n$ qubits to encode a quantum independent parameter.

From the above discussion we can see that the ratio between the quantum memory cost and the classical memory cost is
\begin{align}\label{ratio}
R_{q/c}=\frac{\delta f_q}{(1/2+\delta)f_c+(1/2)f_q},
\end{align}
which can be made close to zero by choosing $\delta$  close to zero.   In conclusion, the size of the quantum memory can be made arbitrarily small compared to the classical memory.

\section{Necessity of a quantum memory.}\label{sec:classical}
In the previous section, we showed that the ratio between the quantum and the classical memory cost can be made arbitrarily close to zero [see Eq. (\ref{ratio})]. It is then natural to ask whether the ratio can   be  exactly zero.
% i.e. compressing faithfully using a fully classical memory.
The answer turns out to be  negative. In fact, we prove an even stronger result:  if a state family has  at least one independent quantum parameter, then no protocol using a purely classical memory can be faithful, even if the amount of classical memory is arbitrarily large.
\begin{mythm}\label{theo:notonlybits}
Let  $\{\rho_{\theta}^{\otimes n}\}$ be a qudit state family with at least one independent quantum parameter, and let  $(\map{E}_n,\map{D}_n)$ be generic compression protocol for $\{\rho_{\theta}^{\otimes n}\}$.    If the protocol uses solely a classical memory, then the compression error will not vanish in the large $n$ limit, no matter how large the memory is.
\end{mythm}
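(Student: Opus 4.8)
The plan is to exploit the structural consequence of using a purely classical memory: if the information is forced through a classical register, the overall channel $\Lambda_n := \map{D}_n \circ \map{E}_n$ must be \emph{entanglement-breaking} (measure-and-prepare), i.e. it admits a representation $\Lambda_n(X)=\sum_i \Tr[M_i\,X]\,\sigma_i$ for some POVM $\{M_i\}$ and fixed states $\{\sigma_i\}$ (with an integral replacing the sum in the continuum case). The first step is to make this reduction precise: encoding into classical bits and decoding necessarily factors through the diagonal algebra of the memory, which is exactly the characterization of entanglement-breaking maps. Crucially, the entanglement-breaking channels form a two-sided ideal, so $\mathcal{A}\circ\Lambda_n\circ\mathcal{B}$ stays entanglement-breaking for arbitrary channels $\mathcal A,\mathcal B$; I will use this to transport the obstruction through Q-LAN.

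Second, I would localize the problem. Since the family has an independent quantum parameter, I fix a non-degenerate interior point $\theta_0=(\xi_0,\mu_0)$ and restrict to a subfamily in which only a quantum coordinate $\xi$ varies, so that the corresponding states do not commute. I then apply the Q-LAN maps $\map{T}^{(n)}_{\theta_0}$ and $\map{S}^{(n)}_{\theta_0}$ of Eqs. (\ref{LAN-T})--(\ref{LAN-S}) to pass from $\rho_\theta^{\otimes n}$ to the Gaussian model $G_{n,\theta}$. Assuming for contradiction that $\Lambda_n$ is asymptotically faithful, the induced channel $\Lambda^{G}_n:=\map{T}^{(n)}_{\theta_0}\circ\Lambda_n\circ\map{S}^{(n)}_{\theta_0}$ is still entanglement-breaking and, combining the triangle inequality with the Q-LAN bound $O(n^{-\kappa})$ and contractivity of the trace norm, it reproduces the Gaussian family with vanishing error: $\|\Lambda^G_n(G_{n,\theta})-G_{n,\theta}\|_1\to 0$. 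Under Q-LAN the independent quantum coordinate becomes the displacement $\alpha_{j,k}\propto\delta\xi$ of one optical mode [cf. Eq. (\ref{LAN-thermal})], with thermal parameter $\beta_{j,k}=(\mu_0)_k/(\mu_0)_j$ bounded away from $1$ at a non-degenerate $\theta_0$. Hence a faithful classical protocol would yield a measure-and-prepare channel reproducing, with error $\to 0$, a genuine continuum of displaced thermal states $\{D_\alpha\rho^{\rm thm}_\beta D^\dag_\alpha\}$ differing only by displacement.

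Third, I would derive a contradiction from a \emph{classical benchmark} for continuous-variable states. A measure-and-prepare channel that returns the input displaced thermal state for every $\alpha$ can be run to prepare two independent outputs $\sigma_i\otimes\sigma_i$ from a single classical record, realizing an approximate universal cloner of displaced thermal states with single-clone fidelity approaching $1$. This contradicts the optimal measure-and-prepare fidelity for such states, which is bounded by a constant strictly below $1$ (the coherent-state teleportation/cloning benchmark). Equivalently, one bounds directly the displacement-averaged fidelity $\int \d^2\alpha\,F\big(\Lambda^G_n(D_\alpha\rho^{\rm thm}_\beta D^\dag_\alpha),\,D_\alpha\rho^{\rm thm}_\beta D^\dag_\alpha\big)$ against this benchmark using covariance of the ensemble under displacements. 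Either route forces $\liminf_n \epsilon>0$, proving the theorem.

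The main obstacle I anticipate is the quantitative classical-benchmark step: one must establish a \emph{uniform} constant gap for measure-and-prepare reproduction of displaced thermal states of fixed $\beta<1$, rather than a pointwise statement, and verify that this gap survives the Q-LAN approximation, i.e. that the $O(n^{-\kappa})$ errors and the restriction to a finite displacement window do not erode the constant lower bound. Choosing the prior over $\alpha$ so that the benchmark integral converges while still covering a non-commuting range of states is the delicate point, and treating the mixedness $\beta>0$ rather than pure coherent states is precisely what extends the pure-state argument of \cite{yang2014certifying} to the generic case.
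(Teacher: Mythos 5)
Your first two steps are sound and essentially reproduce the paper's own reduction: a purely classical memory forces $\map{D}_n\circ\map{E}_n$ to be measure-and-prepare (equivalently, all encoded states commute), and conjugating with the Q-LAN channels, $\map{T}_{\theta_0}^{(n)}\circ(\map{D}_n\circ\map{E}_n)\circ\map{S}_{\theta_0}^{(n)}$, transports a hypothetically faithful classical protocol into one that reproduces, with error at most $\epsilon_n+O(n^{-\kappa})$, a family of displaced thermal states $\rho_{\alpha,\beta}$ in a single mode. The paper does exactly this (it only needs the two states $\rho_{\theta_0}^{\otimes n}$ and $\rho_{\theta_0+s/\sqrt n}^{\otimes n}$ rather than the whole local family), so up to this point your proposal and the paper coincide.

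The genuine gap is in your third step, and it is precisely the obstacle you flag without resolving: no benchmark you can cite applies to the states you actually obtain. Under Q-LAN the relevant mode has thermal parameter $\beta_{j,k}=(\mu_0)_k/(\mu_0)_j$, which is \emph{strictly positive} for every state in the family, because the qudit states considered are full-rank with non-degenerate spectrum; the case $\beta=0$ of pure coherent states never occurs. The coherent-state teleportation/cloning benchmark (measure-and-prepare average fidelity bounded by $1/2$ for a flat prior) is a pure-state result, and a uniform constant gap for measure-and-prepare reproduction of \emph{mixed} displaced thermal states is not an off-the-shelf theorem. Your fallback via cloning has the same problem in quantitative form: preparing two outputs whose marginals are each close to $\rho_{\alpha,\beta}$ is approximate \emph{broadcasting}, and the no-broadcasting theorem is an exact statement proved through the equality conditions of fidelity monotonicity; it does not by itself yield the robust bound ``error bounded below by a positive constant'' that your contradiction requires, and making it robust is the actual mathematical work. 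This is exactly the hole the paper fills with Lemmas \ref{lemma-commute} and \ref{lemma-continuity}: the quantum Hellinger distance dominates the Bures distance with equality iff the states commute, both are monotone under channels, and factoring through commuting (classically encoded) states therefore forces $\left|d_{\rm H}(\rho_1,\rho_2)-d_{\rm B}(\rho_1,\rho_2)\right|\le 2\sqrt{2\epsilon}$. Combined with the closed-form expressions for $d_{\rm H}$ and $d_{\rm B}$ between $\rho^{\rm thm}_\beta$ and $\rho_{\alpha,\beta}$, which differ strictly for every $\alpha\neq 0$ and every $\beta\in[0,1)$, this gives the explicit lower bound $\epsilon_n\ge \frac18\left|d_{\rm H}-d_{\rm B}\right|^2-O(n^{-\kappa})$ on two fixed states --- i.e.\ exactly the uniform, mixedness-tolerant ``classical benchmark'' your argument presupposes. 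If you wish to keep your ensemble-averaged formulation, you would first have to prove such a benchmark for $\beta>0$; the natural proof is the Hellinger--Bures comparison itself, at which point the averaging over $\alpha$, the choice of prior, and the growing displacement window all become unnecessary.
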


The proof of Theorem  \ref{theo:notonlybits}  is based on the properties of  two distance measures, known as the quantum Hellinger distance \cite{hellinger} (see also \cite{luo})  and the Bures distance  \cite{nielsen-chuang}, and defined as
\begin{align}
d_{\rm H}(\rho_1,\rho_2)&:= \sqrt{2-2\Tr \left(\rho_1^{1/2}\rho_2^{1/2}\right)}\\
d_{\rm B}(\rho_1,\rho_2)&:=\sqrt{2- 2\Tr \left|\rho_1^{1/2}\rho_2^{1/2}\right|} \, ,
\end{align}
respectively.
% $d_{\rm H}$ is known as the quantum Hellinger distance and $d_{\rm B}$ is known as the Bures distance.
%The proof that $d_{\rm H}$ and $d_{\rm B}$ are non-negative and satisfy the triangular inequality is provided in Refs.  {\color{red}  add references.}

The first property used in the proof of Theorem \ref{theo:notonlybits}  is
\begin{Lemma}\label{lemma-commute}
For every pair of density matrices $\rho_1$ and $\rho_2$, one has the inequality
$d_{\rm H}(\rho_1,\rho_2)\ge d_{\rm B}(\rho_1,\rho_2)$.  The equality holds if and only if $[\rho_1,\rho_2]=0$.
\end{Lemma}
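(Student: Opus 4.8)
The plan is to reduce the distance inequality to a trace inequality for a single, generally non-Hermitian, operator. Writing $A := \rho_1^{1/2}\rho_2^{1/2}$, the definitions give $d_{\rm H}^2 = 2 - 2\Tr A$ and $d_{\rm B}^2 = 2 - 2\Tr|A|$, so $d_{\rm H} \ge d_{\rm B}$ is equivalent to $\Tr|A| \ge \Tr A$. First I would record that $\Tr A$ is real: since $\rho_1^{1/2}$ and $\rho_2^{1/2}$ are Hermitian, $\overline{\Tr A} = \Tr(A^\dag) = \Tr(\rho_2^{1/2}\rho_1^{1/2}) = \Tr A$ by cyclicity of the trace. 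It therefore suffices to prove the elementary bound $\Tr|A| \ge |\Tr A| \ge \Re\Tr A = \Tr A$.

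For $\Tr|A| \ge |\Tr A|$ I would use the polar decomposition $A = U|A|$ with $U$ unitary, and expand in the orthonormal eigenbasis $\{|v_i\>\}$ of $|A|$ with eigenvalues $\sigma_i \ge 0$:
\[
\Tr A = \sum_i \sigma_i \<v_i|U|v_i\>, \qquad \Tr|A| = \sum_i \sigma_i .
\]
Since $|\<v_i|U|v_i\>| \le 1$ for a unitary $U$, the triangle inequality immediately yields $|\Tr A| \le \Tr|A|$, which proves $d_{\rm H} \ge d_{\rm B}$.

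The substantive part is the equality condition, where the same decomposition does the work. Because $\Tr A$ is real, $\Tr|A| = \Tr A$ forces equality throughout the chain $\Tr|A| = |\Tr A| = \Re\Tr A$. Rewriting $\Tr|A| - \Tr A = \sum_i \sigma_i\big(1 - \Re\<v_i|U|v_i\>\big)$ as a sum of nonnegative terms, equality requires $\Re\<v_i|U|v_i\> = 1$, hence $\<v_i|U|v_i\> = 1$, for every $i$ with $\sigma_i > 0$. For a unitary, a Cauchy--Schwarz saturation argument shows $\<v_i|U|v_i\> = 1$ implies $U|v_i\> = |v_i\>$; thus $U$ acts as the identity on the range of $|A|$ and $A = U|A| = |A| \ge 0$. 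Conversely, $A \ge 0$ trivially gives $\Tr|A| = \Tr A$. Hence equality holds precisely when $A = \rho_1^{1/2}\rho_2^{1/2}$ is positive semidefinite.

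It remains to identify positivity of $A$ with commutativity. If $A \ge 0$ then $A$ is Hermitian, so $\rho_1^{1/2}\rho_2^{1/2} = \rho_2^{1/2}\rho_1^{1/2}$, i.e. $[\rho_1^{1/2},\rho_2^{1/2}] = 0$; squaring gives $[\rho_1,\rho_2] = 0$. Conversely, if $[\rho_1,\rho_2] = 0$ the two states are simultaneously diagonalizable, their square roots commute, and $A$ becomes a product of commuting positive operators, hence positive. I expect the most delicate point to be the equality analysis: justifying the step $\<v_i|U|v_i\> = 1 \Rightarrow U|v_i\> = |v_i\>$, and correctly handling the kernel of $|A|$ when $A$ is singular, where $U$ is only a partial isometry and must be completed to a unitary without affecting the conclusion on the range of $|A|$.
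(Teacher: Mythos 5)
Your proof is correct and follows essentially the same route as the paper's: both reduce the claim to the trace inequality $\Tr|A| \ge \Tr A$ for $A = \rho_1^{1/2}\rho_2^{1/2}$, characterize equality by positive semidefiniteness of $A$, and then identify positivity (equivalently, Hermiticity) of $A$ with the commutation relation $[\rho_1,\rho_2]=0$. The only difference is that you supply the polar-decomposition argument, the equality analysis, and the kernel handling explicitly, whereas the paper asserts these as elementary facts.
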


\begin{proof}
By definition, the condition $d_{\rm H}(\rho_1,\rho_2)\ge d_{\rm B}(\rho_1,\rho_2)$ is equivalent to the condition  $$\Tr \left|\rho_1^{1/2}\rho_2^{1/2}\right|\ge\Tr \left(\rho_1^{1/2}\rho_2^{1/2}\right) \, .$$
 The validity of this condition is immediate:  for every square matrix $A$, one has $\Tr |A|  \ge \Tr A$.   The equality holds if and only if $A$ is equal to $|A|$, meaning that $A$ is positive semidefinite.     For $A   =   \rho_1^{1/2}\rho_2^{1/2}$, the  Hermiticity requirement  $A  =  A^\dag$ reads
  $$\left(\rho_1^{1/2}\rho_2^{1/2}\right)     =   \left(\rho_1^{1/2}\rho_2^{1/2}\right)^\dag=\left(\rho_2^{1/2}\rho_1^{1/2}\right)\, ,$$   which in turn is equivalent to the commutation relation  $[\rho_1,\rho_2]=0$.
\end{proof}

The second property used in the proof of Theorem  \ref{theo:notonlybits}  is
\begin{Lemma}\label{lemma-continuity}
 Let $\map{E}$ be a quantum channel sending states on $\spc H$ to states on  $\spc K$ and let $\map D$ be a quantum channel sending states on $\spc K$ to states on $\spc H$.     Let  $\rho_1$ and $\rho_2$ be two states on  $\spc{H}$, satisfying the conditions
 \begin{align}\label{encodcommut}  [  \map{E}(\rho_1),\map{E}(\rho_2)]=0  \, ,
 \end{align}
 and $\frac12\|\map{D}\circ\map{E}(\rho_i)-\rho_i\|_1\le \epsilon$  for $i  \in  \{1,2\}$.
Then, the following inequality holds:
\begin{align}\label{tobeproven}
\left|d_{\rm H}(\rho_1,\rho_2) - d_{\rm B}(\rho_1,\rho_2)\right|\le2\sqrt{2\epsilon}.
\end{align}
\end{Lemma}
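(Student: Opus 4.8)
The plan is to exploit a two-sided interplay between contractivity of the two distances under channels and the equality case of Lemma \ref{lemma-commute} evaluated at the commuting encoded states. Since Lemma \ref{lemma-commute} already supplies $d_{\rm H}(\rho_1,\rho_2)\ge d_{\rm B}(\rho_1,\rho_2)$, it suffices to establish the one-sided estimate $d_{\rm H}(\rho_1,\rho_2)-d_{\rm B}(\rho_1,\rho_2)\le 2\sqrt{2\epsilon}$; the absolute value in Eq. (\ref{tobeproven}) then follows for free.

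First I would record two analytic facts. (i) The quantum Hellinger distance is the Hilbert--Schmidt distance between square roots, $d_{\rm H}(\rho,\sigma)=\|\rho^{1/2}-\sigma^{1/2}\|_2$, since $\Tr[(\rho^{1/2}-\sigma^{1/2})^2]=2-2\Tr(\rho^{1/2}\sigma^{1/2})$; combined with the Powers--Størmer inequality $\|\rho^{1/2}-\sigma^{1/2}\|_2^2\le\|\rho-\sigma\|_1$ this gives the continuity bound $d_{\rm H}(\rho,\sigma)\le\sqrt{\|\rho-\sigma\|_1}$. (ii) Both $d_{\rm H}$ and $d_{\rm B}$ are contractive under completely positive trace-preserving maps: for $d_{\rm B}$ this is the monotonicity of fidelity, and for $d_{\rm H}$ it follows from the data-processing monotonicity of the affinity $\Tr(\rho^{1/2}\sigma^{1/2})$ (the $\alpha=1/2$ case of the non-decreasing quantity $\Tr(\rho^\alpha\sigma^{1-\alpha})$ under channels). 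Both distances are genuine metrics, hence obey the triangle inequality.

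Writing $\rho_i':=\map{D}\circ\map{E}(\rho_i)$ and using $\tfrac12\|\rho_i'-\rho_i\|_1\le\epsilon$, fact (i) gives $d_{\rm H}(\rho_i,\rho_i')\le\sqrt{2\epsilon}$ for $i\in\{1,2\}$, so the triangle inequality yields
$$d_{\rm H}(\rho_1,\rho_2)\le d_{\rm H}(\rho_1',\rho_2')+2\sqrt{2\epsilon}.$$
I would then run the chain that couples the two distances: contractivity of $d_{\rm H}$ under the decoder gives $d_{\rm H}(\rho_1',\rho_2')\le d_{\rm H}(\map{E}(\rho_1),\map{E}(\rho_2))$; the commutation hypothesis (\ref{encodcommut}) together with the equality case of Lemma \ref{lemma-commute} upgrades this to $d_{\rm H}(\map{E}(\rho_1),\map{E}(\rho_2))=d_{\rm B}(\map{E}(\rho_1),\map{E}(\rho_2))$; and contractivity of $d_{\rm B}$ under the encoder gives $d_{\rm B}(\map{E}(\rho_1),\map{E}(\rho_2))\le d_{\rm B}(\rho_1,\rho_2)$. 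Concatenating these,
$$d_{\rm H}(\rho_1,\rho_2)\le d_{\rm B}(\rho_1,\rho_2)+2\sqrt{2\epsilon},$$
which is exactly the required bound.

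The conceptual crux, and the step I expect to require the most care to phrase cleanly, is the simultaneous use of monotonicity in opposite directions: the Hellinger distance is pushed forward through the decoder $\map{D}$ while the Bures distance is pulled back through the encoder $\map{E}$, with Lemma \ref{lemma-commute} serving as the bridge precisely at the commuting encoded states. The remaining ingredients, namely the Hilbert--Schmidt identity for $d_{\rm H}$, the Powers--Størmer inequality, and the data-processing inequality for the $1/2$-affinity, are standard and I would merely invoke them.
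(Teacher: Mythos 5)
Your proposal is correct and follows essentially the same route as the paper's proof: the triangle inequality for $d_{\rm H}$ combined with the bound $d_{\rm H}(\rho,\sigma)\le\sqrt{\|\rho-\sigma\|_1}$, then monotonicity of $d_{\rm H}$ through the decoder, the equality case of Lemma \ref{lemma-commute} at the commuting encoded states, and monotonicity of $d_{\rm B}$ through the encoder. Your only additions are explicit justifications (Powers--St{\o}rmer, data processing for the affinity) of facts the paper simply cites.
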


\begin{proof}
Using the  the triangle inequality for the quantum Hellinger distance, we obtain the upper bound
\begin{align}  \nonumber 
 d_{\rm H}(\rho_1,\rho_2)&\le   d_{\rm H}\left(\rho_1,\map{D}\circ\map{E}(\rho_1)\right)   +  d_{\rm H}\left(\map{D}\circ\map{E}(\rho_1),\map{D}\circ\map{E}(\rho_2)\right) \\
\label{aaa} &  \quad  +d_{\rm H}\left(\map{D}\circ\map{E}(\rho_2),\rho_2\right)  \,.
\end{align}
Now, for every pair of states $\rho$ and $\sigma$, the quantum Hellinger distance and the trace distance are related by inequality   $d_{\rm H}(\rho,\sigma)\le \sqrt{\|\rho-\sigma\|_1}$ \cite{hellinger}.   Using this  fact, the upper bound   (\ref{aaa}) becomes
\begin{align}\label{bbb} d_{\rm H}(\rho_1,\rho_2)\le    d_{\rm H}\left(\map{D}\circ\map{E}(\rho_1),\map{D}\circ\map{E}(\rho_2)\right)    +   2  \sqrt{2\epsilon} \,.
\end{align}

At this point, we use  the fact that  the quantum Hellinger (respectively, Bures) distance is  non-increasing under the action of quantum channels   \cite{luo}  (respectively, \cite{nielsen-chuang}).
In this way, we obtain  the inequality
\begin{align}
\nonumber d_{\rm H}\left(\map{D}\circ\map{E}(\rho_1),\map{D}\circ\map{E}(\rho_2)\right)  &\le       d_{\rm H}\left(\map{E}(\rho_1),\map{E}(\rho_2)\right) \\
 \nonumber  & =d_{\rm B}\left(\map{E}(\rho_1),\map{E}(\rho_2)\right)   \\
   &\le  d_{\rm B}\left(\rho_1,\rho_2\right) \, , \label{ccc}
\end{align}
in which we used the relation
\begin{align}
d_{\rm H}\left(\map{E}(\rho_1),\map{E}(\rho_2)\right)=d_{\rm B}\left(\map{E}(\rho_1),\map{E}(\rho_2)\right) \, ,
\end{align}
following  from Eq. (\ref{encodcommut})  and  Lemma \ref{lemma-commute}.   Combining Eqs. (\ref{bbb}) and (\ref{ccc}), we finally obtain the bound
\begin{align}
d_{\rm H}(\rho_1,\rho_2)\le d_{\rm B}\left(\rho_1,\rho_2\right)+2\sqrt{2\epsilon} \, .
\end{align}
Since the difference between the quantum Hellinger distance and the Bures distance is non-negative, the above inequality is exactly  Eq. (\ref{tobeproven}).
\end{proof}

Now we give the proof of Theorem \ref{theo:notonlybits}.
\begin{proof}
Let  $\{\rho_\theta^{\otimes n}\}$ be a qudit state family with at least one independent quantum parameter.

Pick two states $\rho_{\theta_0}^{\otimes n}$ and $\rho_{\theta}^{\otimes n}$, with $\theta$ of the form $\theta=\theta_0+s/\sqrt{n}$  where all entries of the vector $s$ are zero except for the independent quantum parameter. Applying Q-LAN to the neighborhood of $\theta_0$, the two states $\rho_{\theta_0}^{\otimes n}$ and $\rho_{\theta}^{\otimes n}$ can be converted into two multi-mode Gaussian states $G_{n,\theta_0}$ and $G_{n,\theta}$ that differ from each other only in one mode. Explicitly, the two Gaussian states can be written as
\begin{align}\label{ddd}
G_{n,\theta_0}&=G^{(-)}_{n,\theta_0}\otimes\rho_{\beta}^{{\rm thm}} \\
\label{eee} G_{n,\theta}&=G^{(-)}_{n,\theta_0}\otimes\rho_{\alpha(s),\beta} \, .
\end{align}
where the thermal parameter  $\beta$ and the displacement $\alpha(s)$ are non-zero quantities depending only on $s$ and $\theta_0$ via Eq. (\ref{LAN-gaussian}), while  $G^{(-)}_{n,\theta_0}$ is the state of the remaining modes.

   Now,  let  $(\map{E}_{n},\map{D}_{n})$ be a compression protocol that uses  a purely classical memory to compress the state family $\{\rho_\theta^{\otimes n}\}$.    By Q-LAN, there exists a compression protocol $(\map{E}'_{n},\map{D}'_{n})$ that uses a purely classical memory to compress the states  $\{G_{n,\theta_0},G_{n,\theta}\}$.
 Explicitly, the encoder and the decoder are described by the channels   % using also a fully classical memory, where
\begin{align}
\map{E}'_{n}:=\map{E}_{n}\circ\map{S}_{\theta_0}^{(n)}\qquad \map{D}'_{n}:=\map{T}_{\theta_0}^{(n)}\circ\map{D}_{n} \, ,
\end{align}
where    $\map{T}_{\theta_0}^{(n)}$ and  $\map{S}_{\theta_0}^{(n)}$ are the channels  used for Q-LAN.

Since the protocol  $(\map{E}'_{n},\map{D}'_{n})$  uses a purely classical memory,  Lemma \ref{lemma-continuity} implies the bound
\begin{align}
\left|d_{\rm H}\left(G_{n,\theta_0},G_{n,\theta}\right) - d_{\rm B}\left(G_{n,\theta_0},G_{n,\theta}\right)\right|\le2\sqrt{2\epsilon'_n}\label{commutators},
\end{align}
where $\epsilon'_n$ is the compression error for  the states  $\{G_{n,\theta_0},G_{n,\theta}\}$.

On the other hand, the errors from Q-LAN vanish as  $O\left(n^{-\kappa(x)}\right)$.  Hence, we have the bound
\begin{align}\label{fff} \epsilon'_n\le\epsilon_n+O\left(n^{-\kappa(x)}\right)
\end{align}

Substituting Eqs.   (\ref{ddd}) and  (\ref{eee}) into Eq. (\ref{commutators}),  we obtain the expression
\begin{align}
\epsilon_n&\ge\epsilon'_n-O\left(n^{-\kappa(x)}\right)\nonumber\\
&\ge \frac18\left|d_{\rm H}\left(G_{n,\theta_0},G_{n,\theta}\right) - d_{\rm B}\left(G_{n,\theta_0},G_{n,\theta}\right)\right|^2-O\left(n^{-\kappa(x)}\right)\nonumber\\
&=\frac18\left|d_{\rm H}\left(\rho_{\beta}^{{\rm thm}},\rho_{\alpha(s),\beta}\right) - d_{\rm B}\left(\rho_{\beta}^{{\rm thm}},\rho_{\alpha(s),\beta}\right)\right|^2-O\left(n^{-\kappa(x)}\right)\label{commutators2}.
\end{align}
The quantum Hellinger and Bures distances between displaced thermal states can be computed using previous results. Using Eq. (16) of \cite{caonima1}, we have
\begin{align}
d_{\rm B}\left(\rho_{\beta}^{{\rm thm}},\rho_{\alpha,\beta}\right)=\sqrt{2-2e^{-\frac{|\alpha|^2}{4\gamma_{\rm B}}}}\qquad \gamma_{\rm B}=\frac{1+\beta}{1-\beta}.
\end{align}
Using Eq. (3.18) of \cite{caonima2}, we have
\begin{align}
d_{\rm H}\left(\rho_{\beta}^{{\rm thm}},\rho_{\alpha,\beta}\right)=\sqrt{2-2e^{-\frac{|\alpha|^2}{4\gamma_{\rm H}}}}\qquad \gamma_{\rm H}=\frac{(\sqrt{\beta}+1)^2}{2(1-\beta)}.
\end{align}
The difference between the two terms is strictly positive, except in the case when $\alpha=0$. Therefore, the right hand side of Eq. (\ref{commutators2}) is strictly positive in the limit of $n\to\infty$, and thus $\lim_{n\to\infty}\epsilon_n>0$. This concludes the proof.
\end{proof}

\section{Optimality of the compression}\label{sec-optimality}
Here we prove that our compression protocol is asymptotically optimal in terms of total memory cost.   Specifically, we show  that every   compression protocol    with vanishing error must use an overall memory size of  at least $(f/2)\log n$, where $f$ is the number of independent parameters describing the input states.

The proof idea is to construct a communication protocol that transmits approximately $(f/2)\log n$ bits, using the compression protocol.
    Once this is done,  the Holevo bound  \cite{holevo-1973}  implies that the overall amount of memory must be of at least $(f/2)\log n$ qubits/bits.

To construct the communication protocol, we define a mesh $\set{M}_n$ on the parameter space $\set{\Theta}$, by choosing a set of equally spaced points starting from a fixed point $\theta_0  \in \Theta$.  Specifically, we define the mesh as
\begin{align}
\set{M}_n=\left\{\theta\in\set{\Theta}~|~|(\theta-\theta_0)_i|=z_i\cdot\log n/\sqrt{n}, z_i\in\mathbb{Z} \ \forall\ i\right\} \, .
\end{align}
 The number of points  in the mesh  $\set M_n$ satisfies the bound
\begin{align}\label{points}
|\set{M_n}|\ge T_{\set{\Theta}}\left[\frac{\sqrt{n}}{\log n}\right]^{f},
\end{align}
where $T_{\set{\Theta}}>0$ is a constant independent of $n$.

The next step is to define a finite set of states
\begin{align}
\set S_n  =  \left\{\rho_\theta^{\otimes n}  ~|~  \theta\in \set M_n \right\} \, ,
\end{align}
and to observe that they are almost perfectly distinguishable in the large $n$ limit.    One way to distinguish between the states  in  $\set S_n$ is to use quantum tomography.  Intuitively, since tomography provides an estimate of the state with error of order $1/\sqrt n$,    the distance between two states in the set  is large enough to make the states almost perfectly distinguishable.
   To make this argument rigorous, we  describe the tomographic protocol using a POVM $P_n  (\d \hat \rho)$, where $\hat \rho $ is the estimate of the state.    In particular, we use the POVM defined in Eq. (\ref{prob-0}), which has the property  \cite{tomography}
\begin{align}
\int_{\|\hat \rho-\rho\|_1  >\varepsilon}\,\Tr[P_n (\d \hat \rho)  \,\rho^{\otimes n}]&\le (n+1)^{3d^2}e^{-n\varepsilon^2}.
\end{align}

The continuous POVM    $P_n  (\d \hat \rho )$ can be used to distinguish between the states in the set $\set S_n$.    To this purpose, we  construct the discrete POVM  with operators   $\{  Q_n (\theta) \}_{\theta  \in  \set M_n}   \cup  \{ Q_n ({\rm rest})\}$, where the operator  $Q_n ({\rm rest})$ is  defined as
\begin{align}
Q_n(  {\rm rest})  =    I^{\otimes n}   -   \sum_{\theta \in \set M_n}  \,    Q_n ( \theta) \,
\end{align}
while the operator $Q_n (\theta)$ is defined as
\begin{align}
Q_n (\theta)&=\int_{ \|\hat \rho-\rho_{{\theta}}\|_1\le  \frac {\varepsilon_{\min}}2 }  \,  P_n (\d \hat \rho)  \,  ,  \qquad      \qquad    \theta  \in \set M_n  \, .
\end{align}
Here $\varepsilon_{\min}$ is the minimum distance between two distinct states in $\set S_n$, which can be quantified as
\begin{align}
\varepsilon_{\min}=\min_{\theta, \theta'\in\set{M}_n \, ,   \theta \not = \theta'}\frac12\|\rho_\theta-\rho_{\theta'}\|_1=\frac{C\log n}{2\sqrt{n}}+O\left(\frac{\log^2 n}{n}\right) \, ,
\end{align}
having used the   Euclidean expansion of trace distance, given by
\begin{align}\|\rho_\theta-\rho_{\theta'}\|_1 = C \|\theta-\theta'\|_\infty+O(\|\theta-\theta'\|_\infty^2) \, ,
\end{align}
where  $C>0$ is a suitable constant.  Hence, for any $\theta \in  \set M_n $ the probability of error for the state $\rho_{\theta}^{\otimes n}$ can be bounded as
\begin{align}
\nonumber P_{{\rm err},  n}(\theta)&\le  \int_{\|\hat \rho  - \rho_{\theta}\|_1>\frac{\varepsilon_{\min}}2} \, \Tr\left[  P_n  (   \d \hat \rho ) \rho_{\theta}^{\otimes n}  \right]\\
\nonumber &\le (n+1)^{3d^2}e^{-\frac{C\log^2 n}{16}}\\
\nonumber &= (n+1)^{3d^2}n^{-\frac{C\log n}{16\ln 2}}\\
\label{perr}  &\le n^{-\frac{C\log n}{16}} ,
\end{align}
where the last inequality holds for large enough $n$.

Using the results above, we can construct a communication protocol that communicates $(f/2)  \log n$ bits given any compression protocol $(\map{E}_{n},\map{D}_{n})$.  The protocol is defined as follows:
\begin{enumerate}
\item Both parties agree   on a code that associates  messages with points in the mesh $\set{M}_n$.
\item To communicate a certain message, the sender picks the corresponding point $\theta\in\set M_n$   and prepares the state $\rho_\theta^{\otimes n}$.
\item The sender applies the encoder $\map{E}_{n}$ and transmits $\map{E}_n(\rho_\theta^{\otimes n})$ to the receiver.
\item The receiver applies the  decoder $\map{D}_{n}$.
\item The receiver   measures the output state with the POVM $\{Q_n (\theta) \}_{\theta \in\set M_n}  \cup  \{  Q_n ({\rm rest})\}$.
\end{enumerate}

The protocol is illustrated in Fig. \ref{fig-opt}. A protocol can be constructed for displaced thermal states, following the steps 1) - 4) and replacing the POVM in step 5) of the above protocol by the heterodyne measurement of $\alpha$ and maximum likelihood estimation of $\beta$ \cite{BS}. In this way, the proof here can be converted to a proof of optimality for displaced thermal states, which we omit for simplicity.
\begin{figure}  [t!]
\begin{center}
  \includegraphics[width=1\linewidth]{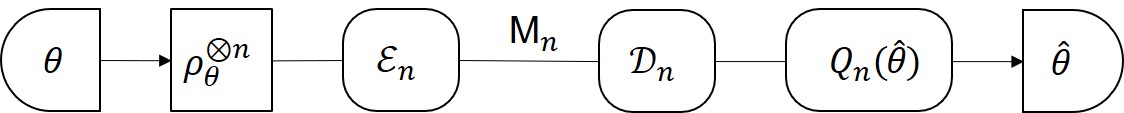}
  \end{center}
\caption{\label{fig-opt}
  {\bf A protocol to communicate $\log|\set{M}_n|$ bits of information.} Here $\map{E}_{n}$ and $\map{D}_{n}$ are the encoder and the decoder, $Q_n  (\hat \theta)$ is the POVM to recover the message, and $\set{M}_n$ denotes the memory. }
\end{figure}

It is not hard to see that the protocol can communicate no less than  $(f/2)  \log n$ bits, with an error probability
\begin{align}\label{P-ast}
P^\ast_{{\rm err}, n}\le P_{{\rm err}, n}+\epsilon_n
\end{align}
where $P_{{\rm err}, n} :  =  \max_{\theta \in \set M_n}      P_{{\rm err}, n}  (\theta)$ and $\epsilon_n$ is the error of the compression protocol $(\map{E}_{n},\map{D}_{n})$.
 Consider the case when the messages are uniformly distributed.  In this case, the number of transmitted bits can be bounded through Fano's inequality, which yields the bound
\begin{align}
I(\Theta:\hat{\Theta}) &\ge (1-P^\ast_{{\rm err}, n})\log|\set{M}_n|-h(P^\ast_{{\rm err},  n}) \,,
\end{align}
with $h(x)=-x\log x$ and $h(0):=0$.
When the compression protocol has vanishing error, i.e. $\lim_{n\to\infty}\epsilon_n=0$,
using Eqs. (\ref{points}), (\ref{perr}), and (\ref{P-ast}) we obtain the lower bound
\begin{align}
I(\Theta:\hat{\Theta}) &\ge \frac{f}2\log n-f\log\log n+o(1)\label{opt-inter2} \, .
\end{align}
Using the monotonicity of mutual information and the upper bound of entropy, the total number $n_{\rm enc}$ of memory bits/qubits is lower bounded as
\begin{align}
n_{\rm enc}&\ge H\left[\map{E}_{n}\left(|\set{M}_n|^{-1}\sum_{\theta\in\set{M}_n}\rho_\theta^{\otimes n}\right)\right]\nonumber \\
&\ge I(\Theta: \set{M}_n)\nonumber\\
& \ge I(\Theta:\hat{\Theta})\label{opt-inter3}.
\end{align}
Combining Eq. (\ref{opt-inter3}) with Eq. (\ref{opt-inter2}), we obtain that $n_{\rm enc}$ must be at least
\begin{align}
n_{\rm enc}&\ge  \frac{f}2\log n-f\log\log n+o(1) \, .
\end{align}
This proves that $f/2  \log n$  bits/qubits are necessary to achieve compression with vanishing error.

%\section{Necessity to use quantum memory}\label{sec-fisher}

\section{Conclusion}\label{sec-discussion}

In this work we addressed the problem of compressing identically prepared states of finite-dimensional quantum systems and identically prepared displaced thermal states. We showed that
%the total size of the required memory is proportional to the number of independent parameters of the state.
the total size of the required memory is approximately $(f/2) \log n$, where $f$ is the number of independent parameters of the state and $n$ is the number of input copies. Moreover, we  observed that the asymptotic ratio between the amount of quantum bits and the amount of classical  bits can be set to an arbitrarily small constant.  Still,  a  fully classical memory cannot faithfully encode genuine quantum states: only states that are jointly diagonal in fixed basis can be compressed into a purely classical memory.

A natural development of our work is the study of compression protocols for  quantum  population coding beyond the case of identically prepared states.  Motivated by the existing literature on classical population coding \cite{WAN}, the idea is to consider families of states representing a  population of  quantum particles.  At the most fundamental level, the indistinguishability of quantum particles leads to the  Bose-Einstein \cite{bose1924plancks,einstein1924quantentheorie} and Fermi-Dirac statistics \cite{fermi1928statistische,dirac1928quantum}, as well as to other intermediate statistics  \cite{leinaas1977theory,wilczek1982quantum}.   Since the Hilbert space describing identical quantum particles is not the
  tensor product of single-particle Hilbert spaces, the compression of quantum populations of indistinguishable particles  requires a non-trivial extension of our results.   The optimal compression protocols  are likely to shed light on how information is encoded into a broad range of real physical systems. In addition, the compression protocols will offer a tool to  simulate  large numbers of particles using quantum computers of relatively smaller size.

From the point of view of quantum simulations,
it is also meaningful to consider the compression of  tensor network states \cite{tensor1,tensor2} %\cite{fannes1992finitely,verstraete2004renormalization,vidal2007entanglement}
 which  provide a variational ansatz for a large number of quantum manybody systems.
The extension of quantum compression to this scenario is appealing as a technique to reduce the number of qubits needed to simulate systems of distinguishable quantum particles, in the same spirit of the compressed simulations introduced by Kraus \cite{kraus2011compressed} and coauthors  \cite{boyajian2013compressed,boyajian2015compressed}.
In the long term, the information-theoretic study of manybody quantum systems may provide a new approach to the simulation of complex systems  that are not efficiently simulatable on classical computers.

%Also, we did not discuss the case with fully classical memory.
%It is another interesting open problem to clarify the difference between the performances of quantum and classical memories.
%We further showed that fully classical memory cannot faithfully encode genuine quantum states, which distinguished the work from a simple extension of the classical version of this problem.

\subsection*{Acknowledgments}

This work is supported by the Canadian Institute for Advanced Research
(CIFAR), by the Hong Kong Research Grant Council
through Grants No. 17326616 and 17300317, by National Science Foundation of China through Grant No. 11675136, and
by the HKU Seed Funding for Basic Research, and by the Foundational
Questions Institute through grant FQXi-RFP3-1325.   Y. Y. is
supported by a Hong Kong and China Gas Scholarship.
MH was supported in part by a MEXT Grant-in-Aid for Scientific Research (B) No. 16KT0017, the Okawa Research Grant and Kayamori
Foundation of Informational Science Advancement.

\bibliography{ref}
\bibliographystyle{IEEEtrans}

\appendix

%(completely positive trace preserving map)

\subsection{ Proof of Lemma \ref{lemma-thermal}.}

To compress the $n$-fold thermal state $(\rho^{{\rm thm}}_{\beta})^{\otimes n}$ into classical memory, we can first do measurements on the state, and then encode the outcome using the smallest  possible classical memory. The state $\rho^{{\rm thm}}_{\beta}$ is fully described by the thermal parameter $\beta$, or equivalently, the average photon number $\beta/(1-\beta)$. An  estimate of $\beta/(1-\beta)$ can be obtained by measuring the photon number of the $n$ modes independently and by computing the sum.  The sum  can  be encoded  into a classical memory. To this purpose,  we divide the set of nonnegative integers into intervals, and only store the index of the interval the sum lies in.

For any $\delta>0$, we define a series of $t+1=\lfloor n^{1/2+\delta}\rfloor$ intervals $\set L_0, \dots \set L_t$ as
\begin{align*}
\set{L}_0&=\{ 0\}\\
\set{L}_i&=\{ (i-1) \, \lfloor n^{(1-\delta)/2}\rfloor+1 ,\dots, i \,  \lfloor n^{(1-\delta)/2}\rfloor   \}\quad 0<i<t\\
\set{L}_t&=\{  (t-1) \, \lfloor n^{(1-\delta)/2}\rfloor+1 ,\dots \}.
\end{align*}
%We also denote by $$\set{Ini}=\{0,\lfloor \sqrt{n}/\log n \rfloor, \lfloor 2\sqrt{n}/\log n\rfloor, \dots\}$$ the collection of all initials of these intervals.
For any non-negative integer $m$, we denote by $i(m)$ the index of the interval containing $m$, i.e. $m\in\set{L}_{i(m)}$.

To design the compression protocol, we notice that the $n$-fold thermal state can be written in the form
\begin{align}
(\rho^{{\rm thm}}_{\beta})^{\otimes n}=(1-\beta)^n\sum_{\vec{m}}\beta^{|\vec{m}|}|\vec{m}\>\<\vec{m}|,
\end{align}
where $|\vec{m}\>=|m_1\>\otimes \cdots\otimes|m_n\>$ is the photon number basis of $n$ modes and $|\vec{m}|:=m_1+\cdots+m_n$.
The compression protocol runs as follows:
\begin{itemize}
\item \emph{Encoder.} First perform projective measurement in the photon number basis of $n$ modes, which yields an $n$-dimensional vector $\vec{m}$.
%For convenience, we define a map $f$ which takes any $\vec{m}$ to the initial of the interval containing $|\vec{m}|$, formally defined as
%\begin{align*}
%f: \vec{m}\to m_{\rm ini}\in\set{Ini}\quad{\rm s.t.}\quad m_{\rm ini}\in\set{L}_{i(|\vec{m}|)}.
%\end{align*}
Then compute $i(|\vec{m}|)$ and encode it into a classical memory.
The encoding channel can be represented as
\begin{align*}
\map{E}^{{\rm thm}}_{n,\delta}(\rho)&:=
\sum_{\vec{m}}\<\vec{m}|\rho|\vec{m}\>\ |i(|\vec{m}|)\>\<i(|\vec{m}|)|.
\end{align*}
\item \emph{Decoder.} Read the integer $i$ from the memory. If $i\ge t$ prepare a fixed state $|\vec{t}\>\<\vec{t}|$ (defined below); if $i<t$ perform random sampling in the interval $\set{L}_i$. For each outcome $\hat{m}$ of the sampling, prepare the $n$-mode state
$${n+\hat{m}-1\choose \hat{m}}^{-1}\sum_{\vec{m}:|\vec{m}|=\hat{m}}|\vec{m}\>\<\vec{m}|.$$
 Then the decoding channel can be represented as
\begin{align*}
&\map{D}^{{\rm thm}}_{n,\delta}\left(|i\>\<i|\right):=\left\{\begin{matrix}\sum_{\vec{m}:|\vec{m}|\in \set L_i}
\frac{  |\vec{m}\>\<\vec{m}|}{|\set{L}_i |{n+|\vec{m}|-1\choose |\vec{m}|}}&\quad i<t\\
|\vec{t}\>\<\vec{t}| & \quad   i= t\end{matrix}\right.\\
&\vec{t}=((t-1) \, \lfloor n^{(1-\delta)/2}\rfloor+1,\dots,(t-1) \, \lfloor n^{(1-\delta)/2}\rfloor+1).
\end{align*}
\end{itemize}
It is straightforward from definition that the protocol uses $\log (t+1)=(1/2+\delta)\log n+o(1)$ classical bits. What remains is to bound the error of the protocol. First, we notice that the recovered state is
\begin{align}
&\map{D}^{{\rm thm}}_{n,\delta}\circ\map{E}^{{\rm thm}}_{n,\delta}\left[(\rho^{{\rm thm}}_{\beta})^{\otimes n}\right]=\sum_{\vec{m}}\left(\rho_{\beta,n}\right)_{\vec{m}}|\vec{m}\>\<\vec{m}|\\
&\left(\rho_{\beta,n}\right)_{\vec{m}}=\left\{\begin{matrix}(1-\beta)^n\beta^{|\vec{m}|}\sum_{m\in\set{L}_{i(|\vec{m}|)}}\frac{\beta^{m-|\vec{m}|}{n+m-1\choose m}}{|\set{L}_{i(|\vec{m}|)}|{n+|\vec{m}|-1\choose |\vec{m}|}}& \quad i(|\vec{m}|)<t\\
\\
\sum_{\vec{m}':i(|\vec{m}'|)=t}(1-\beta)^n\beta^{|\vec{m}'|}&\quad \vec{m}=\vec{t}\\
\\
0&\quad \text{else}.
\end{matrix}\right.
\end{align}

We choose $\set{S}$ as the minimal set satisfying $i)$ $\set{S}$ is a union of several intervals chosen from the set $\{\set{L}_i\}$ and ii)
\begin{align}
\set{S}\supset \left[\frac{\beta n}{1-\beta}-n^{(1-\delta)/2},\frac{\beta n}{1-\beta}+n^{(1-\delta)/2}\right].
\end{align}
Apparently, $\set{S}\subset[0,n^{1+\delta/2}]$ for large enough $n$.
Then the error can be bounded as
\begin{align*}
\epsilon_{\rm thm}&=\frac12\left\|\map{D}^{{\rm thm}}_{n,\delta}\circ\map{E}^{{\rm thm}}_{n,\delta}\left[(\rho^{{\rm thm}}_{\beta})^{\otimes n}\right]-(\rho^{{\rm thm}}_{\beta})^{\otimes n}\right\|_1\\
&=\frac12\sum_{\vec{m}}\left|(1-\beta)^n\beta^{|\vec{m}|}-\left(\rho_{\beta,n}\right)_{\vec{m}}\right|\\
&\le \sum_{\vec{m}:|\vec{m}|\not\in\set{S}}(1-\beta)^n\beta^{|\vec{m}|}+\frac12\left[\sum_{\vec{m}:|\vec{m}|\in\set{S}}(1-\beta)^n\beta^{|\vec{m}|}\right]  \\   
\nonumber &   \qquad \qquad  \qquad\times \max_{\tiny\begin{matrix}m'\in\set{L}_{i}\\\set{L}_i\cap\set{S}\not=\emptyset\end{matrix}}\left|\sum_{m\in\set{L}_i}\frac{\beta^{m}{n+m-1\choose m}}{|\set{L}_i|\beta^{m'}{n+m'-1\choose m'}}-1\right|\\
\nonumber & \qquad \qquad \qquad +\sum_{\vec{m}':i(|\vec{m}'|)=t}(1-\beta)^n\beta^{|\vec{m}'|}\\
&\le \sum_{\vec{m}:|\vec{m}|\not\in\set{S}}(1-\beta)^n\beta^{|\vec{m}|}+\frac12\left[\sum_{\vec{m}:|\vec{m}|\in\set{S}}(1-\beta)^n\beta^{|\vec{m}|}\right]  \\
\nonumber & \qquad \qquad \qquad \times \max_{\tiny\begin{matrix}m,m'\in\set{L}_{i}\\\set{L}_i\cap\set{S}\not=\emptyset\end{matrix}}\left|\frac{\beta^{m}{n+m-1\choose m}}{\beta^{m'}{n+m'-1\choose m'}}-1\right|  \\
\nonumber &  \qquad \qquad \qquad +O\left((1-\beta)^n \beta^n\right)\\
&\le \sum_{\vec{m}:|\vec{m}|\not\in\set{S}}(1-\beta)^n\beta^{|\vec{m}|}  \\
&  \qquad+\frac12\max_{\tiny\begin{matrix}m,m'\in\set{L}_{i}\\\set{L}_i\cap\set{S}\not=\emptyset\end{matrix}}\left|\frac{\beta^{m}{n+m-1\choose m}}{\beta^{m'}{n+m'-1\choose m'}}-1\right|  \\
& \qquad +O\left((1-\beta)^n \beta^n\right).
\end{align*}
On one hand, we notice that $|\vec{m}|$ is the sum of $n$ i.i.d. random variables with geometric distribution $\{(1-\beta)\beta^i\}_{i=0}^{\infty}$ and thus, by Central Limit Theorem, the first error term scales as
\begin{align*}
\sum_{\vec{m}:|\vec{m}|\not\in\set{S}}(1-\beta)^n\beta^{|\vec{m}|}&=O\left[ \erfc\left(\frac{n^{\delta/2}(1-\beta)}{\sqrt{2\beta}}\right)\right]=e^{-\Omega(n^{\delta})}
\end{align*}
where $\erfc(\delta):=(2/\pi)\int_\delta^{\infty}e^{-s^2}\d s$ is the complementary error function. On the other hand, in the second error term, $m$ and $m'$ are in the same order as $n$, so the second term can be bounded as
\begin{align*}
&\max_{\tiny\begin{matrix}m,m'\in\set{L}_{i}\\\set{L}_i\cap\set{S}\not=\emptyset\end{matrix}}\left|\frac{\beta^{m}{n+m-1\choose m}}{\beta^{m'}{n+m'-1\choose m'}}-1\right| \\
&\quad =\max_{\tiny\begin{matrix}m,m'\in\set{L}_{i}\\\set{L}_i\cap\set{S}\not=\emptyset\end{matrix}}\left|\beta^{m-m'}\frac{(n+m-1)\cdots(n+m')}{m\dots(m'+1)}-1\right|\\
& =\max_{\tiny\begin{matrix}m,m'\in\set{L}_{i}\\\set{L}_i\cap\set{S}\not=\emptyset\end{matrix}}\left|\left(\frac{\beta m+\beta n}{m}\right)^{m-m'}\left[1+O\left(\frac{|\set{L}_i|^2}{n}\right)\right]-1\right|\\
&=\left|\left[1+O\left(n^{-\delta}\right)\right]\left[1+O\left(n^{-\delta}\right)\right]-1\right|\\
&=O\left(n^{-\delta}\right).
\end{align*}
Therefore, we have proved Eq. (\ref{error-thermal}).

\subsection{Derivation of Eq. (\ref{amp-output})}\label{app-amp}
%For completeness, here we provide a proof of the relation
% \begin{align*}
%&\map{A}^{\gamma}\left(\rho_{\alpha,\beta}\right)=\rho_{\sqrt{\gamma}\alpha,\beta'}\qquad \beta'=\frac{\beta+\gamma-1}{\gamma}.
%\end{align*}
Eq. (\ref{amp-output}) is a standard result in quantum optics. Here we provide its derivation for the benefit of those readers who may be less familiar with this area.  

Note that the amplifier of Eq. (\ref{amplifier}) can be represented as
\begin{align}\label{amplifier-2}
\map{A}^{\gamma}(\rho)&=\Tr_B\left[S_{\cosh^{-1}(\sqrt{\gamma})}(\rho\otimes |0\>\<0|_B)S_{\cosh^{-1}(\sqrt{\gamma})}^\dag\right]
\end{align}
with $S_r:=e^{r (\hat{a}^\dag \hat{b}^\dag-\hat{a}\hat{b})}$. The unitary $S_r$ satisfies $S_r\hat{a}S_r^\dag=(\cosh r)\hat{a}-(\sinh r)\hat{b}^\dag$ (cf. Eq. (B8) of \cite{caves1982quantum}), which immediately implies the relation $S_rD_\alpha=D_{(\cosh r)\alpha}S_r$. Hence, we have
\begin{align}
\map{A}^{\gamma}\circ\map{D}_{\alpha}=\map{D}_{\sqrt{\gamma}\alpha}\circ\map{A}^{\gamma}
\end{align}
for any $\alpha\in\C$ and $\gamma\ge1$.   In particular, when  the amplifier is applied to displaced thermal states, one has the relation 
\begin{align}\label{commute-amp}
\map{A}^\gamma(\rho_{\alpha,\beta})=\map{D}_{\sqrt{\gamma}\alpha}\circ\map{A}^\gamma(\rho^{\rm thm}_\beta) \, .
\end{align}
To prove Eq. (\ref{amp-output}), it only  remains to show the identity  $\map{A}^\gamma(\rho^{\rm thm}_\beta)=\rho^{\rm thm}_{\beta'}$ with $\beta'$ as in Eq. (\ref{amp-output}).  This equality, which is  standard  in quantum optics, can be proven by observing that  every  thermal state can be generated from the vacuum through the action of the Gaussian additive noise channel $\map N_x$, defined as 
\begin{align}\label{gaussian-def}
\map{N}_x(\rho)=\int \d^2\mu\,\left(\frac{x}{\pi}\right)e^{-x|\mu|^2}\,D_\mu\,\rho D_\mu^\dag \, .
\end{align}
Specifically, it is easy to verify the relation 
\begin{align}
\rho_{\beta}^{\rm thm}= \map{N}_{\frac{1-\beta}{\beta}}(|0\>\<0|),
\end{align}
valid for every $\beta\in  (0,1]$.   Then, using Eq. (\ref{commute-amp}), we have
\begin{align}
\map{A}^\gamma(\rho^{\rm thm}_\beta)&=\map{A}^\gamma\circ\map{N}_{\frac{1-\beta}{\beta}}(|0\>\<0|)=\map{N}_{\frac{1-\beta}{\beta\gamma}}\circ\map{A}^\gamma(|0\>\<0|).
\end{align}
It is straightforward to verify that $\map{A}^\gamma(|0\>\<0|)$ is a thermal state; specifically, one has
\begin{align}
\map{N}_{\frac{1}{\gamma-1}}(|0\>\<0|) \, .
\end{align}   Hence, we have
\begin{align}
\map{A}^\gamma(\rho^{\rm thm}_\beta)&=\map{N}_{\frac{1-\beta}{\beta\gamma}}\circ\map{N}_{\frac{1}{\gamma-1}}(|0\>\<0|)\nonumber\\
&=\map{N}_{\frac{1-\beta}{\gamma-1+\beta}}(|0\>\<0|)\nonumber\\
&=\rho^{\rm thm}_{\beta'}\qquad \beta'=\frac{\beta+\gamma-1}{\gamma}\label{ABCD2},
\end{align}
the second equality following  from the composition property  of the additive Gaussian noise
\begin{align*}
\map{N}_x\circ \map{N}_y=\map{N}_{\frac{xy}{x+y}}\qquad \forall\, x, y>0.
\end{align*}
Indeed, the above equation can be derived using the definition of the additive Gaussian noise (\ref{gaussian-def}):
\begin{align*}
\map{N}_x\circ \map{N}_y(\rho)&=\int\d^2\mu\int\d^2\nu\,\left(\frac{xy}{\pi^2}\right)e^{-x|\mu|^2-y|\nu|^2}\,D_{\mu+\nu}\,\rho D_{\mu+\nu}^\dag\\
&=\int\d^2\mu\int\d^2\alpha\,\left(\frac{xy}{\pi^2}\right)e^{-x|\mu|^2}\cdot e^{-y|\alpha-\mu|^2}\,D_{\alpha}\,\rho D_{\alpha}^\dag\\
&\qquad\qquad\alpha:=\mu+\nu\\
&=\int\d^2\mu\left(\frac{x+y}{\pi}\right)e^{-(x+y)\left|\mu-\frac{y}{x+y}\alpha\right|^2}  \\
&  \qquad \times \int\d^2\alpha\,\left[\frac{xy}{(x+y)\pi}\right]e^{-\frac{xy}{x+y}|\alpha|^2}\,D_{\alpha}\,\rho D_{\alpha}^\dag\\
&=\map{N}_{\frac{xy}{x+y}}(\rho).
\end{align*}
Combining Eqs. (\ref{commute-amp}) and (\ref{ABCD2}) gives Eq. (\ref{amp-output}) as desired.

\subsection{Proof of Eq. (\ref{thermal-property})}\label{app-trivial}
The distance between 
\begin{align*}
\left\|\rho^{{\rm thm}}_{\beta'}-\rho^{{\rm thm}}_{\beta}\right\|_1&\le\frac{2|\beta'-\beta|}{(1-\beta')^2}+O(|\beta'-\beta|^2) \, .
\end{align*}
By definition
\begin{align*}
\left\|\rho^{{\rm thm}}_{\beta'}-\rho^{{\rm thm}}_{\beta}\right\|_1&=\sum_{j=0}^{\infty}|(1-\beta)\beta^j-(1-\beta')(\beta')^j|\\
&\le\sum_{j=0}^{\infty}\left[|\beta^j-(\beta')^j|+|\beta^{j+1}-(\beta')^{j+1}|\right]\\
&=2\sum_{j=0}^{\infty}|\beta^j-(\beta')^j|+O(|\beta-\beta'|^2).
\end{align*}
Extracting $|\beta-\beta'|$ from the first term on the r.h.s. of the last equality and focusing on the first order, we get
\begin{align*}
\left\|\rho^{{\rm thm}}_{\beta'}-\rho^{{\rm thm}}_{\beta}\right\|_1&\le2|\beta-\beta'|\sum_{j=0}^{\infty}(j+1)\beta^{j}+O(|\beta-\beta'|^2)\\
&=\frac{2|\beta-\beta'|^2}{(1-\beta)^2}+O(|\beta-\beta'|^2)
\end{align*}
as desired.

\subsection{ Proof of Lemma \ref{lemma-truncation}.} \label{proof-lemma-truncation}
For any input state $\rho$ we have
\begin{align*}
\epsilon(\rho)&\le\frac12\left\|P_K\rho P_K-\rho\right\|_1+\frac12[1-\Tr(\rho P_K)]\\
&\le \frac12\left[2\sqrt{[1-\Tr(\rho P_K)]}+1-\Tr(\rho P_K)\right]\\
&\le \frac32\sqrt{1-\Tr(\rho P_K)}.
\end{align*}
The second inequality came from the gentle measurement lemma \cite{gentle,wilde}. Substituting $\rho_{\alpha,\beta}=D_\alpha\rho^{{\rm thm}}_{\beta} D^\dag_\alpha$ into the above bound, we express the truncation error for $\rho_{\alpha,\beta}$ as
\begin{align}
\epsilon(\rho_{\alpha,\beta})&\le \frac32\sqrt{1-\Tr\left[D_{\alpha}\rho^{{\rm thm}}_{\beta} D^\dag_{\alpha}P_K\right]}\label{eN-aaa}.
\end{align}
We now bound the trace part in the right hand side of the last inequality as
\begin{align}
\nonumber1-\Tr\left[D_{\alpha}\rho^{{\rm thm}}_{\beta} D^\dag_{\alpha}P_\alpha\right]=&\sum_{k\le K}\sum_{l=0}^{\infty}(1-\beta)\beta^l |\<l|D_{\alpha}|k\>|^2\\
\nonumber\le&\max_{l\le l_0}\sum_{k>K} |\<l|D_{\alpha}|k\>|^2+\sum_{l\ge l_0}^{\infty}(1-\beta)\beta^l \\
\le&\max_{l\le l_0}\sum_{k>K} |\<l|D_{\alpha}|k\>|^2+\beta^{l_0}\label{error-mixed-coherent-1}
\end{align}
Here we set $l_0=K^{x/8}$. Notice that $|\<l|D_{\alpha}|k\>|^2$ is the photon number distribution of a displaced number state \cite{displaced-number}, which can be bounded as
\begin{align*}
|\<l|D_{\alpha}|k\>|^2=&\frac{e^{-|\alpha|^2}(|\alpha|^2)^{k+l}}{k!l!}\left|\sum_{i=0}^{\min\{k,l\}} \frac{k!l!(-|\alpha|^2)^{-i}}{i!(k-i)!(l-i)!}\right|^2\\
\le&\frac{e^{-|\alpha|^2}(|\alpha|^2)^{k+l}}{k!l!}\left|\sum_{i=0}^k{k\choose i} \left(\frac{l}{|\alpha|^2}\right)^i\right|^2\\
=&\frac{e^{-|\alpha|^2}|\alpha|^{2l}}{k!l!}\left(\frac{l+|\alpha|^2}{|\alpha|}\right)^{2k}.
\end{align*}
Then we can bound the first term in (\ref{error-mixed-coherent-1}) as
\begin{align*}
\max_{l\le l_0}\sum_{k>K} |\<l|D_{\alpha}|k\>|^2\le&\max_{l\le l_0}\frac{|\alpha|^{2l}}{l!}\sum_{k>K}\frac{e^{-|\alpha|^2}}{k!}\left(\frac{l+|\alpha|^2}{|\alpha|}\right)^{2k}\\
=&\max_{l\le l_0}\frac{e^{2l+\frac{l^2}{|\alpha|^2}}|\alpha|^{2l}}{l!}\sum_{k>K}\mathbf{Pois}_{\lambda_\alpha}(k),
\end{align*}
where $\mathbf{Pois}_{\lambda_\alpha}(k)$ is the Poisson distribution with mean $\lambda_\alpha=(l+|\alpha|^2)^2/|\alpha|^2$. Notice that $[\lambda_\alpha-K^{1/2+x/2},\lambda_\alpha+K^{1/2+x/2}] \subseteq [0,K]$ and thus we have
\begin{align*}
\max_{l\le l_0}\sum_{k>K} |\<l|D_{\alpha}|k\>|^2\le&\max_{l\le l_0}\frac{|\alpha|^{2l}e^{2l+\frac{l^2}{|\alpha|^2}}}{l!}\sum_{|k-\lambda_\alpha|>K^{1/2+x/2}}\mathbf{Pois}_{\lambda_\alpha}(k)\\
=&\max_{l\le |\alpha|^{x/4}}\frac{|\alpha|^{2l}e^{2l+\frac{l^2}{|\alpha|^2}}}{l!}e^{-\Omega(K^{x/2})}\\
=&e^{-\Omega(K^{x/4})}.
\end{align*}
having used the tail bound for Poisson distribution and $l_0=K^{x/8}$.
Substituting the above bound into Eq. (\ref{error-mixed-coherent-1}), we get
\begin{align*}
\sqrt{1-\Tr\left[D_{\alpha}\rho^{{\rm thm}}_{\beta} D^\dag_{\alpha}P_\alpha\right]} \le& \beta^{\Omega(K^{x/8})}+e^{-\Omega(K^{x/4})}.
\end{align*}
Finally, substituting the above inequality into Eq. (\ref{eN-aaa}), we can bound the error as
\begin{align*}
\epsilon(\rho_{\alpha,\beta})=\beta^{\Omega(K^{x/8})}+e^{-\Omega(K^{x/4})}.
\end{align*}

\end{document}